\newtheorem{theorem}{Theorem}[section]
\newtheorem{proposition}{Proposition}[section]
\newtheorem{corollary}{Corollary}[section]
\newtheorem{lemma}[theorem]{Lemma}
\theoremstyle{definition}
\newtheorem{defn}{Definition}[section]
\newtheorem{remark}{Remark}[section]
\theoremstyle{remark}
\title{Galois LCD  Codes  Over $\mathbb{F}_q+u\mathbb{F}_q+v\mathbb{F}_q+uv\mathbb{F}_q$ }
\author{Astha Agrawal, Gyanendra K. Verma  and R. K. Sharma}
\date{}
\begin{document}
	\maketitle	
%
	
	\begin{abstract}
		In \cite{anote}, Wu and Shi studied $ l $-Galois LCD codes over finite chain ring  $\mathcal{R}=\mathbb{F}_q+u\mathbb{F}_q$, where $u^2=0$ and  $ q=p^e$ for some prime $p$ and positive integer $e$. In this work, we extend the results to the finite non chain ring $ \mathcal{R} =\mathbb{F}_q+u\mathbb{F}_q+v\mathbb{F}_q+uv\mathbb{F}_q$, where $u^2=u,v^2=v $ and $ uv=vu $. We define a correspondence between   $ l $-Galois dual of linear codes over $ \mathcal{R} $  and $ l $-Galois dual of its component codes over $ \mathbb{F}_q .$ Further, we construct Euclidean LCD and $ l $-Galois LCD codes from linear code over $ \mathcal{R} $. This consequently leads us to prove that any linear code over $ \mathcal{R} $ is equivalent to Euclidean ($ q>3 $) and $ l $-Galois LCD ($0<l<e$, and $p^{e-l}+1\mid p^e-1$) code over $ \mathcal{R} .$ Finally, we investigate MDS codes over $ \mathcal{R} .$	
	\end{abstract}
\textbf{Mathematics Subject Classification:} 94B05.\\
\textbf{Keywords:} Linear code, Euclidean LCD code, $l$-Galois LCD code, Gray map, MDS code.
	\section{Introduction}
	Let $\mathcal{C}$ be a linear code which intersects its dual trivially then $\mathcal{C}$ is said to be an LCD code (shortened form for Linear Complimentary Dual code). In 1992, LCD codes were  defined and characterized by Massey   \cite{massey} over finite fields. For a two-user binary adder channel, an optimal linear coding solution is obtained by LCD codes. After this, LCD codes  gained the interest of many researchers due to its wide applications in many areas like consumer electronics, data storage communications systems, and cryptography. In 1994, Yang and Massey \cite{conditionforcycliccodemassey} derived LCD cyclic codes. Carlet and Guilley \cite{sideattackcarlet} constructed several LCD codes and presented an implementation of binary LCD codes against fault injection and side channel attacks.
	
	We refer to the vast literature   on  constructions and characterisation  of Euclidean and Hermitian LCD codes in \cite{constructionmds,newmds,cafinite,newchpr} and the references therein. Fan and Zhang \cite{kgaloisintroduce} generalized  Euclidean and Hermitian inner products to the  $ l $-Galois inner product over  finite fields. They  studied  self dual constacyclic codes for the $ l $-Galois inner product over  finite fields.	Liu et al. \cite{galoislcd} obtained  $ l $-Galois LCD codes over  finite fields, where they characterized $ \lambda $-constacyclic codes as $ l $-Galois LCD codes. In \cite{lcdovercommring},  some criteria for a linear code to be  an LCD code over  a finite commutative ring were obtained.  
	
	A linear  code $\mathcal{C}$ with parameter $ [n,k,d] $ over a finite field is said to be maximum distance separable (MDS) code if the minimum distance $ d $ of  code $\mathcal{C}$ attains Singleton bound, i.e., $ d=n-k+1 $. 
	MDS codes have very good theoretical and practical properties. Carlet et al. \cite{euclideanhermitian} discussed the existence of Euclidean LCD MDS codes over the finite field and gave several constructions of Euclidean and Hermitian LCD MDS codes.  In \cite{someresultsz4},  the authors  studied  linear codes over the ring $ \mathbb{Z}_4+u\mathbb{Z}_4+v\mathbb{Z}_4+uv\mathbb{Z}_4 $  for Euclidean inner product and   discussed some properties of Euclidean dual and  MDS codes. Several authors investigated skew cyclic codes, constacyclic codes and quantum error correcting codes over ring $ \mathcal{R} $ \cite{skewc,ConstacyclicCO,quantumcode}. 	The study of $ l $-Galois LCD codes over  finite chain ring $ \mathbb{F}_q+u\mathbb{F}_q $ was done in \cite{anote}. Authors proved that for any linear code over $ \mathbb{F}_q+u\mathbb{F}_q $ there exists equivalent Euclidean and $ l $-Galois LCD code. However, $ l $-Galois LCD codes over the finite non chain ring, $ \mathcal{R}=\mathbb{F}_q+u\mathbb{F}_q+v\mathbb{F}_q+uv\mathbb{F}_q, $ have not been studied yet. The main idea of this article is to deal with this problem. Taking inspiration from \cite{anote}, we characterize $ l $-Galois linear codes over finite non chain ring $\mathcal{R}=\mathbb{F}_q+u\mathbb{F}_q+v\mathbb{F}_q+uv\mathbb{F}_q .$ The outline of this article is arranged as follows.
	
	Section \ref{pre} contains the basic mathematical background required thereafter. We have defined an inner product which is a generalisation of Euclidean and Hermitian inner products over $\mathcal{R}$. In Section \ref{coder}, we have obtained the  $ l $-Galois LCD code over $ \mathcal{R} .$ We have also discussed basic results on a gray image of $l$- Galois LCD code and its dual.  In Section \ref{construction}, we  construct  Euclidean and $ l $-Galois LCD codes  from linear codes over $ \mathcal{R} .$  Moreover, we demonstrate that  a linear code  over $ \mathcal{R} $ is equivalent to Euclidean and $ l $-Galois LCD code over $ \mathcal{R} .$ In Section \ref{mdscode}, we have looked into the MDS codes over $ \mathcal{R} .$ We also have given results connecting $\mathcal{C}^{\perp_l}$ and $\mathcal{C}$ whenever one of them is MDS code. Finally, Section \ref{con} concludes the article with some establishments of future work that can be done using this work. 
	\section{Preliminaries} \label{pre}
	
	In this paper, we denote $ q $ as a prime power that is $q=p^e$, for some  integer $ e>0 $ and $ \mathbb{F}_q $ as a finite field of order $ q $.	Let us consider the ring $\mathcal{R}=\mathbb{F}_q+u\mathbb{F}_q+v\mathbb{F}_q+uv\mathbb{F}_q=\{a+ub+vc+uvd \mid u^2=u, v^2=v,  uv=vu, \text{ and } a,b,c,d \in \mathbb{F}_q \}$.
	It is easy to observe that $\mathcal{R}$ is a commutative and principal ideal ring.  Since it has four maximal ideals so it is  a semi-local ring and  a finite  non-chain ring. Let $\gamma_1=1-u-v+uv,\ \gamma_2=uv,\ \gamma_3=u-uv,\ \gamma_4=v-uv$ then  $\sum_{i=1}^4 \gamma_i=1$, $\gamma_i^2=\gamma_i$ and  $\gamma_i\gamma_j=0$ for $i\neq j$. By Chinese remainder theorem,  we can write $\mathcal{R}=\gamma_1\mathcal{R}\oplus\gamma_2\mathcal{R}\oplus\gamma_3\mathcal{R}\oplus\gamma_4\mathcal{R}$ and  $\gamma_i\mathcal{R}\cong\gamma_i\mathbb{F}_q\ $ for  $  i=1,2,3,4$. For any $ a\in \mathcal{R} $, $ a $ can be uniquely written as $ a=\sum_{i=1}^{4}\gamma_i a_i $ where $ a_i\in \mathbb{F}_q $ for $ i=1,2,3,4 .$ Hence $\mathcal{R}\cong\gamma_1\mathbb{F}_q\oplus\gamma_2\mathbb{F}_q\oplus\gamma_3\mathbb{F}_q\oplus\gamma_4\mathbb{F}_q$.
	\begin{defn}
		A code over $\mathcal{R}$ of length $n$ is a non empty subset of $\mathcal{R}^n$. The code $\mathcal{C}$ is said to be linear if it is $\mathcal{R}$-submodule of $\mathcal{R}^n$.
	\end{defn}
	
	\begin{defn}
		The Hamming weight of $x=(x_1,x_2,\dots,x_n) \in \mathbb{F}_q^n$ is defined to be the cardinality of non-zero $x_i$'s, for $i\in\{1,2,\cdots,n\}$. It is denoted by $wt_H(x)$. If $y\in \mathbb{F}_q^n$, the hamming distance is defined as hamming weight of vector $(x-y)$. 
	\end{defn}

	\begin{defn}
		The Hamming distance of a code $\mathcal{C}$, denoted by $d_H(\mathcal{C})$, is the number $d_H(\mathcal{C})=\text{min}\{wt_H(x) \text{ $|$ }x\neq0\}$.
	\end{defn}
	
	\begin{defn}
		Let $ r=a_1+a_2u+a_3v+a_4uv\in \mathcal{R} $, the Lee weight of vector $ r $, denoted by $wt_L(r)$, is defined as  $ wt_L(r)=wt_H(a_1,a_1+a_2,a_1+a_3,a_1+a_2+a_3+a_4) $. The definition of Lee weight can be extended to $\mathcal{R}^n$. Let  $s=(s_1,s_2,\dots,s_n) \in \mathcal{R}^n $, then the Lee weight of $s$ is  $wt_L(s)=\sum_{i=1}^{n} w_L(s_i)$. If $t=(t_1,t_2,\dots,t_n) \in  \mathcal{R}^n $ then the Lee distance between two vectors $ s $ and $ t $, denoted by $d_L(s,t)$ is defined as $ d_L(s,t)=wt_L(s-t)=\sum_{i=1}^{n}wt_L(s_i-t_i).$ 
	\end{defn}
	
	\begin{defn}
		The Lee distance of the  code $ \mathcal{C} $ is the number $ d_L(\mathcal{C})=\text{min}\{d_L(s-t) \text{ $|$ }s\neq t\}.$
	\end{defn}
	A function $ \rho : \mathcal{R} \mapsto \mathbb{F}_q^4 $ is said to be gray map, if it is bijective and distance preserving. The following  gray map is defined in \cite{skewc},
	\begin{align*}
		\rho:&\  \mathcal{R}\to \mathbb{F}_q^4\\
		\rho(r)=\rho(a_1+a_2u+a_3v+a_4uv)&=(a_1,a_1+a_2,a_1+a_3,a_1+a_2+a_3+a_4).
	\end{align*}
	An equivalent gray map for  $ r=\sum_{i=1}^{4} \gamma_ir_i\in \mathcal{R}$, where $ r_i\in \mathbb{F}_q $ for $i=1,2,3,4 $, is defined as
	$$\rho(r)=\rho\left( \sum_{i=1}^{4}\gamma_i r_i\right) =(r_1,r_2,r_3,r_4).$$
	In a similar way, we can easily extend this map from $\mathcal{R}^n$ to $\mathbb{F}_q^{4n}.$
	By the definition of gray map, $ \rho  $ is a  linear map over $ \mathbb{F}_q $ and it preserves distance  from $ (\mathcal{R}^n,d_L) $ to $ (\mathbb{F}_q^{4n},d_H) $ , where $d_L$ is Lee distance and $d_H$ is Hamming distance. The following result can be directly obtained from  the definition of $\rho$.
	\begin{proposition} 
		For a linear code $\mathcal{C}$ of length $n$ over the ring $\mathcal{R}$ with cardinality $q^k$ and  Lee distance  $d$, $ \rho(\mathcal{C}) $  is $[4n,k,d]$ linear code over $\mathbb{F}_q$.
	\end{proposition}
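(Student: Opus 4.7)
The proposition is essentially a packaging of three properties of the Gray map that have already been announced in the excerpt (bijectivity, $\mathbb{F}_q$-linearity, and the isometry $(\mathcal{R}^n,d_L)\to(\mathbb{F}_q^{4n},d_H)$), so my plan is to verify each of the four parameters $4n$, $k$, $d$ and the linearity claim in turn, using only these stated properties of $\rho$ together with the defining property that $\mathcal{C}$ is an $\mathcal{R}$-submodule of $\mathcal{R}^n$.

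First I would handle the length and linearity. By construction, $\rho$ sends $\mathcal{R}^n$ into $\mathbb{F}_q^{4n}$, so $\rho(\mathcal{C})\subseteq \mathbb{F}_q^{4n}$ has block length $4n$. For $\mathbb{F}_q$-linearity of the image, I would note that $\mathbb{F}_q\subseteq \mathcal{R}$, so the $\mathcal{R}$-submodule $\mathcal{C}$ is in particular an $\mathbb{F}_q$-subspace of $\mathcal{R}^n$; combining this with the $\mathbb{F}_q$-linearity of $\rho$ stated in the excerpt gives that $\rho(\mathcal{C})$ is an $\mathbb{F}_q$-subspace of $\mathbb{F}_q^{4n}$.

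Next I would pin down the dimension. Since $\rho$ is bijective, it restricts to a bijection $\mathcal{C}\to\rho(\mathcal{C})$, so $|\rho(\mathcal{C})|=|\mathcal{C}|=q^k$. As $\rho(\mathcal{C})$ is a finite-dimensional $\mathbb{F}_q$-subspace, if $\dim_{\mathbb{F}_q}\rho(\mathcal{C})=k'$ then $q^{k'}=q^k$, forcing $k'=k$. Finally, for the minimum distance, I would invoke the distance-preserving property already stated: for any distinct $s,t\in\mathcal{C}$ one has $d_H(\rho(s),\rho(t))=d_L(s,t)$, and bijectivity ensures every pair of distinct elements of $\rho(\mathcal{C})$ arises this way, so
\[
d_H(\rho(\mathcal{C}))=\min_{s\ne t\in\mathcal{C}} d_H(\rho(s),\rho(t))=\min_{s\ne t\in\mathcal{C}} d_L(s,t)=d_L(\mathcal{C})=d.
\]

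There is no genuine obstacle here; the entire proof reduces to applying the properties of $\rho$ highlighted just above the statement. The only spot that warrants any care is the dimension count, where one must ensure that the cardinality hypothesis $|\mathcal{C}|=q^k$ is really being interpreted as an $\mathbb{F}_q$-dimension after passage through $\rho$ — this is immediate because $\rho$ is bijective and $\mathbb{F}_q$-linear, but it is worth mentioning explicitly so that the parameter $k$ on the two sides is unambiguously the same integer.
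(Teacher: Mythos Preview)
Your proposal is correct and matches the paper's approach: the paper does not write out a proof at all, stating only that the result ``can be directly obtained from the definition of $\rho$,'' and your argument is precisely the routine unpacking of the bijectivity, $\mathbb{F}_q$-linearity, and isometry of $\rho$ announced just before the proposition.
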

	
	Define a Frobenius  operator $ F $ over $\mathcal{R}$ as follows:
	\begin{align*}
		F&:\mathcal{R}\to \mathcal{R}\\
		F(a_1+a_2u+a_3v+a_4uv)&=a_1^p+ua_2^p+va_3^p+uva_4^p.
	\end{align*}

	Equivalently, for $r=\sum_{i=1}^{4}\gamma_i r_i\in \mathcal{R}$,	
	\begin{align*}
		F(r)&=\gamma_1r_1^p+\gamma_2r_2^p+\gamma_3r_3^p+\gamma_4r_4^p.
	\end{align*}

	For $ s=(s_1,s_2,\dots,s_n) $ and $  t=(t_1,t_2,\dots,t_n) \in \mathcal{R}^n$, define 
	$l$-Galois inner product, for $ 0\leq l\leq e-1 $, $$ [s,t]_l=\sum_{i=1}^{n}s_iF^{l}(t_i) .$$
	\begin{remark} 
		Note that, above inner product is generalization of  Euclidean and Hermitian inner product for $ l=0 $ and $ l=\frac{e}{2} $ (when $ e $ is even), respectively.
	\end{remark} 
	From onwards, we denote $ [s,t], [s,t]_H $ and $ [s,t]_l $ as Euclidean, Hermitian and $ l $-Galois inner product over $ \mathcal{R} $ and $ \langle s,t\rangle, \langle s,t\rangle_H $ and $ \langle s,t\rangle_l $ as Euclidean, Hermitian and $ l $-Galois inner product over $ \mathbb{F}_q $, respectively. The  $ l$-Galois dual code  $\mathcal{C}^{\perp_l}$ of $\mathcal{C}$ over $\mathcal{R}$ is defined as
	$$\mathcal{C}^{\perp_l}=\{s\in \mathcal{R}^n|\ \ [t,s]_l=0\ \ \forall t\in \mathcal{C}\}.$$
	Clearly, $\mathcal{C}^{\perp_l}$ is linear code over $\mathcal{R}$.
	A linear code over $ \mathcal{R} $ is said to be $ l $-Galois LCD if $ \mathcal{C}\cap \mathcal{C}^{\perp_l}=\{0\} .$ It is well known that for a Frobenius ring $ \mathcal{R} $ and a linear code $\mathcal{C}$ over the ring $ \mathcal{R} $ of length $ n $, the  product of cardinality of $\mathcal{C}$ and $\mathcal{C}^{\perp_l}$ is equal to  the cardinality of $ \mathcal{R}^n $  that is $ |\mathcal{C}||\mathcal{C}^{\perp_l}|=|\mathcal{R}^n| .$
	\begin{remark}
		For $ l=0 $ and $ l = e/2 $ (when $ e $ is even), it is  Euclidean  and  Hermitian dual code, respectively.
	\end{remark}
	
	\section{ $ { l } $-Galois Linear Codes over $ \mathbf{\mathcal{R}} $}  \label{coder}
	We  derive a necessary and sufficient condition for $\mathcal{C}$ to be an $ l$-Galois LCD code over $ \mathcal{R} $ with respect to its component codes. Also, we  give a relationship between $l$-Galois LCD codes and its gray images. 
	
	A linear code $\mathcal{C}$ over $ \mathcal{R} $ can be decomposed into  four  component codes over finite filed $ \mathbb{F}_q .$ Let us  introduce the component codes of $\mathcal{C}$ as follows
	\begin{align*}
		\mathcal{C}_1=\{x\in \mathbb{F}_q^n|\gamma_1x+\gamma_2y+\gamma_3z+\gamma_4w\in \mathcal{C}\ \text{for some}\ \ y,z,w\in \mathbb{F}_q^n\},\\
		\mathcal{C}_2=\{y\in \mathbb{F}_q^n|\gamma_1x+\gamma_2y+\gamma_3z+\gamma_4w\in \mathcal{C}\ \text{for some}\ \  x,z,w \in \mathbb{F}_q^n\},\\
		\mathcal{C}_3=\{z\in \mathbb{F}_q^n|\gamma_1x+\gamma_2y+\gamma_3z+\gamma_4w\in \mathcal{C}\ \text{for some} \ \ x,y,w \in \mathbb{F}_q^n\},\\
		\mathcal{C}_4=\{w\in \mathbb{F}_q^n|\gamma_1x+\gamma_2y+\gamma_3z+\gamma_4w\in \mathcal{C}\ \text{for some}\ \ x,y,z \in \mathbb{F}_q^n\}.
	\end{align*}
	It can be observed that $\mathcal{C}_i$'s are linear codes over $ \mathbb{F}_q $  for $ 1\leq i\leq 4$ and $\mathcal{C}=\gamma_1 \mathcal{C}_1\oplus\gamma_2 \mathcal{C}_2\oplus\gamma_3\mathcal{C}_3\oplus\gamma_4 \mathcal{C}_4$. We say $ \mathcal{C}_1,\mathcal{C}_2,\mathcal{C}_3 $ and $ \mathcal{C}_4 $ are component codes of the linear code $ \mathcal{C} .$  The cardinality of a linear code $\mathcal{C}$ is product of cardinalities of its component codes, 
	$$|\mathcal{C}|=|\mathcal{C}_1||\mathcal{C}_2||\mathcal{C}_3||\mathcal{C}_4|.$$
	The Lee distance of a linear code $\mathcal{C}$ is the minimum of Hamming distances of component codes $ \mathcal{C}_i $'s, 
	$$d_L(\mathcal{C})=\min_{1\leq i\leq 4}\{d_H(\mathcal{C}_i)\}.$$ 
	Let us define $\mathcal{C}^{p^l}=\{(F^l(c_1),F^l(c_2),...,F^l(c_n))| (c_1,c_2,...,c_n)\in \mathcal{C}\}$  and  $F^l(G)=(F^l(g_{ij}))$ for any matrix $ G=(g_{ij}) $ over $ \mathcal{R} $. For $ 1\leq i \leq 4 $, let $G_i$ be generator matrix for $\mathcal{C}_i$, then the generator matrix  for $\mathcal{C}$ is
	$$G=\begin{bmatrix}
		\gamma_1G_1\\
		\gamma_2G_2\\
		\gamma_3G_3\\
		\gamma_4G_4
	\end{bmatrix}.$$
	The generator matrix for $\rho(\mathcal{C})$ is
	
	$$\rho(G)=\begin{bmatrix}
		\rho(\gamma_1G_1)\\
		\rho(\gamma_2G_2)\\
		\rho(\gamma_3G_3)\\
		\rho(\gamma_4G_4)
	\end{bmatrix},$$
	where $\rho(\gamma_iG_i)$ is matrix over $\mathbb{F}_q$ for $ 1\leq i\leq 4. $  Since $ \gamma_i\gamma_j=0 $ for $ i\neq j $ and $ \gamma_i^2=\gamma_i $ for $ i=1,2,3,4 ,$  we get
	\begin{align*}
		G(F^{e-l}(G))^T=&\begin{bmatrix}
			\gamma_1 G_1F^{e-l}(G_1)^T & 0 &0 &0\\
			0&\gamma_2 G_2F^{e-l}(G_2)^T &0&0\\
			0&0&\gamma_3 G_3F^{e-l}(G_3)^T&0\\
			0&0&0&\gamma_4 G_4F^{e-l}(G_4)^T
		\end{bmatrix}.
	\end{align*} 
	We call $\mathcal{C}$ as  an $ [n,k,d] $ code over $ \mathcal{R} $ if $\mathcal{C}$ is the  code of length $ n $,  $ |\mathcal{C}|=q^k $ and $ d $ is the Lee distance. If $\mathcal{C}_i $'s are component codes  with parameters $ [n,k_i,d_i] $ for $ i=1,2,3,4 $ then   $ k=\sum\limits_{i=1}^4k_i $ and $ d= \min\limits_{1\leq i\leq 4}\{d_i\} .$
	
	In the following lemma, we observe that the Euclidean dual of $\mathcal{C}^{p^{(e-l)}}$ is equal to the $l$-Galois dual code of $\mathcal{C}$.
	\begin{lemma}\label{lemma3.1}
		If $\mathcal{C}$ is an $ [n,k,d] $  linear code  over $ \mathcal{R} $ then  $ \mathcal{C}^{p^{(e-l)}} $ is an  $[n,k,d]$ linear code over $ \mathcal{R} $ and $ \mathcal{C}^{\perp_l}=\left(\mathcal{C}^{p^{(e-l)}}\right) ^\perp .$ Moreover, if $\mathcal{C}$ has a generator matrix $G$, then a generator matrix of $\mathcal{C}^{p^{(e-l)}}$ is  $ F^{e-l}(G)$. 
	\end{lemma}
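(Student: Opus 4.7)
The plan is to first establish that $F$ is a ring automorphism of $\mathcal{R}$ of order $e$, and then deduce each of the three assertions by combining this with the standard calculations over a finite field.

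First I would verify the ring-automorphism claim. In the idempotent decomposition $r=\sum_{i=1}^{4}\gamma_i r_i$, the map $F(r)=\sum_{i=1}^{4}\gamma_i r_i^{p}$ is nothing but the usual Frobenius $x\mapsto x^{p}$ on each factor $\gamma_i\mathbb{F}_q\cong\mathbb{F}_q$. Because the $\gamma_i$ are pairwise orthogonal idempotents, additivity and multiplicativity of $F$ reduce to the field case, and $F^{e}=\mathrm{id}$ since $x^{p^{e}}=x$ on $\mathbb{F}_q$; hence $F^{e-l}$ is an automorphism with inverse $F^{l}$. The two formulas for $F$ given in the paper agree by a freshman's-dream computation in characteristic $p$.

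Next I would show $\mathcal{C}^{p^{(e-l)}}$ is an $\mathcal{R}$-linear code with the same parameters. Additivity of $F^{e-l}$ gives closure under addition. For closure under $\mathcal{R}$-scalar multiplication, given $r\in\mathcal{R}$ and $c\in\mathcal{C}$, write $r=F^{e-l}(F^{l}(r))$; then
\[
r\cdot F^{e-l}(c)=F^{e-l}(F^{l}(r))\,F^{e-l}(c)=F^{e-l}\bigl(F^{l}(r)\,c\bigr)\in\mathcal{C}^{p^{(e-l)}},
\]
using multiplicativity of $F^{e-l}$. Bijectivity of $F^{e-l}$ forces $|\mathcal{C}^{p^{(e-l)}}|=|\mathcal{C}|=q^{k}$. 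For the Lee distance, applying $F$ to $r=a_{1}+a_{2}u+a_{3}v+a_{4}uv$ raises each coordinate of $\rho(r)$ to the $p$-th power (again by freshman's dream), so the Hamming support is preserved; thus $F^{e-l}$ is a Lee-isometry and $d_{L}(\mathcal{C}^{p^{(e-l)}})=d$.

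For the generator matrix I would use that $F^{e-l}$ distributes over matrix multiplication: any codeword $xG\in\mathcal{C}$ with $x\in\mathcal{R}^{k}$ satisfies $F^{e-l}(xG)=F^{e-l}(x)\,F^{e-l}(G)$, and as $x$ runs over $\mathcal{R}^{k}$ so does $F^{e-l}(x)$. Hence the rows of $F^{e-l}(G)$ span $\mathcal{C}^{p^{(e-l)}}$. Finally, for the duality identity, for $s\in\mathcal{R}^{n}$ and every $t\in\mathcal{C}$ I would rewrite
\[
[t,s]_{l}=\sum_{i=1}^{n}t_{i}F^{l}(s_{i})=F^{l}\!\Bigl(\sum_{i=1}^{n}F^{e-l}(t_{i})\,s_{i}\Bigr),
\]
using $F^{e}=\mathrm{id}$ and the ring-homomorphism property of $F^{l}$. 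Since $F^{l}$ is a bijection, $[t,s]_{l}=0$ for all $t\in\mathcal{C}$ iff $\sum_{i}F^{e-l}(t_{i})s_{i}=0$ for all $t\in\mathcal{C}$, i.e.\ iff $s\in(\mathcal{C}^{p^{(e-l)}})^{\perp}$. The main subtlety I expect is the ring-homomorphism check for $F$: in the $\{1,u,v,uv\}$-basis the formula $a_{1}^{p}+a_{2}^{p}u+a_{3}^{p}v+a_{4}^{p}uv$ does not visibly respect products, and the cleanest way through is to pass to the $\{\gamma_{i}\}$-basis where $F$ becomes the Frobenius on each of the four $\mathbb{F}_q$ components; once this structural point is in place, everything else is routine transfer of finite-field arguments.
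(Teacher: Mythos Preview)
The paper states Lemma~\ref{lemma3.1} without proof, so there is nothing to compare against; your argument stands on its own and is correct. Each step checks out: the passage to the $\{\gamma_i\}$-basis makes the ring-automorphism property of $F$ transparent, the closure of $\mathcal{C}^{p^{(e-l)}}$ under $\mathcal{R}$-scalars via $r=F^{e-l}(F^{l}(r))$ is the right trick, the Lee-isometry claim follows because $\rho\circ F^{e-l}$ raises each $\mathbb{F}_q$-coordinate to the $p^{e-l}$-th power and hence preserves Hamming support, and the duality computation $[t,s]_l=F^{l}\bigl(\sum_i F^{e-l}(t_i)s_i\bigr)$ is exactly what is needed. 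Your write-up would serve well as the omitted proof.
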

	Next theorem provides the decomposition of the $l$-Galois dual code into its component codes. Consequently, we obtain a relation between  $l$-Galois LCD codes over $ \mathcal{R} $ and  $l$-Galois LCD of its component codes over $ \mathbb{F}_q .$      
	\begin{theorem}\label{bthm}
		The  following holds for a linear code  $ \mathcal{C}= \bigoplus_{i=1}^4\gamma_i \mathcal{C}_i $  over $ \mathcal{R} $.
		\begin{enumerate}
			\item \label{b1} $ \mathcal{C}^{\perp_l}= \bigoplus_{i=1}^4\gamma_i \mathcal{C}_i^{\perp_l} .$
			\item \label{b2}  $\mathcal{C}$ is an $ l $-Galois LCD code over $ \mathcal{R} $ if and only if its all component codes $ \mathcal{C}_i $'s are $ l $-Galois LCD code over $ \mathbb{F}_q $ for $ 1\leq i\leq 4. $
			\item \label{b3} $\mathcal{C}$ is an $ l $-Galois self orthogonal linear code over $\mathcal{R}$ if and only if its all component codes $ \mathcal{C}_i $'s are $ l $-Galois self orthogonal codes over $ \mathbb{F}_q .$ Also, $\mathcal{C}$ is a self-dual code if and only if its all component codes $ \mathcal{C}_i $'s are	self-dual codes over $ \mathbb{F}_q  $ for $ 1\leq i\leq 4. $ 
		\end{enumerate}
	\end{theorem}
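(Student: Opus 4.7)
The plan is to prove part \ref{b1} directly from the definition of the $l$-Galois inner product, and then deduce parts \ref{b2} and \ref{b3} as formal consequences of the orthogonal-idempotent decomposition $\mathcal{R}=\bigoplus_{i=1}^4 \gamma_i\mathbb{F}_q$. The whole theorem is essentially a componentwise verification; the only thing to check carefully is that both the Frobenius operator $F$ and the inner product $[\,\cdot\,,\,\cdot\,]_l$ respect the $\gamma_i$-decomposition.

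For part \ref{b1}, I would take an arbitrary $t=\sum_{i=1}^4\gamma_i t^{(i)}\in\mathcal{C}$ and $s=\sum_{i=1}^4\gamma_i s^{(i)}\in\mathcal{R}^n$ with $t^{(i)}\in\mathcal{C}_i$ and $s^{(i)}\in\mathbb{F}_q^n$. Using $\gamma_i\gamma_j=0$ for $i\neq j$, $\gamma_i^2=\gamma_i$, and the fact that $F^l$ acts as the $p^l$-power Frobenius on each $\gamma_i$-component (so $F^l(\gamma_j s^{(j)}_k)=\gamma_j (s^{(j)}_k)^{p^l}$), a direct expansion gives
$$[t,s]_l \;=\; \sum_{k=1}^n t_k F^l(s_k) \;=\; \sum_{i=1}^4 \gamma_i \langle t^{(i)},s^{(i)}\rangle_l.$$
Because $\gamma_1,\dots,\gamma_4$ are $\mathbb{F}_q$-linearly independent, this sum vanishes if and only if $\langle t^{(i)},s^{(i)}\rangle_l=0$ for each $i$. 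As $t$ ranges over $\mathcal{C}$ the component $t^{(i)}$ ranges over all of $\mathcal{C}_i$ independently, so $s\in\mathcal{C}^{\perp_l}$ iff $s^{(i)}\in\mathcal{C}_i^{\perp_l}$ for every $i$, proving $\mathcal{C}^{\perp_l}=\bigoplus_{i=1}^4\gamma_i\mathcal{C}_i^{\perp_l}$. A cardinality check (using $|\mathcal{C}_i||\mathcal{C}_i^{\perp_l}|=q^n$ and $|\mathcal{C}||\mathcal{C}^{\perp_l}|=q^{4n}$) is consistent and can be mentioned as a sanity check.

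Parts \ref{b2} and \ref{b3} then follow with essentially no work. For part \ref{b2}, combining $\mathcal{C}=\bigoplus\gamma_i\mathcal{C}_i$ with part \ref{b1} and $\gamma_i\mathcal{R}\cap\gamma_j\mathcal{R}=\{0\}$ for $i\neq j$ gives
$$\mathcal{C}\cap\mathcal{C}^{\perp_l} \;=\; \bigoplus_{i=1}^4 \gamma_i\bigl(\mathcal{C}_i\cap\mathcal{C}_i^{\perp_l}\bigr),$$
which is trivial iff each intersection $\mathcal{C}_i\cap\mathcal{C}_i^{\perp_l}$ is trivial. For part \ref{b3}, self-orthogonality $\mathcal{C}\subseteq\mathcal{C}^{\perp_l}$ is equivalent via part \ref{b1} to $\gamma_i\mathcal{C}_i\subseteq\gamma_i\mathcal{C}_i^{\perp_l}$ for each $i$, hence to $\mathcal{C}_i\subseteq\mathcal{C}_i^{\perp_l}$; and self-duality follows by the same argument together with the cardinality identity $|\mathcal{C}|=\prod_i|\mathcal{C}_i|$ and $|\mathcal{C}^{\perp_l}|=\prod_i|\mathcal{C}_i^{\perp_l}|$. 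I do not anticipate a real obstacle; the one spot to be careful is verifying that $F$ commutes with the $\gamma_i$-decomposition, which is immediate from the equivalent formula $F(\sum\gamma_ir_i)=\sum\gamma_i r_i^p$ given in the preliminaries.
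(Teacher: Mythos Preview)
Your proposal is correct and follows essentially the same approach as the paper: part~\ref{b1} is proved by expanding $[t,s]_l=\sum_i\gamma_i\langle t^{(i)},s^{(i)}\rangle_l$ using the orthogonal idempotents, and parts~\ref{b2} and~\ref{b3} are deduced from~\ref{b1}. Your treatment of part~\ref{b2} via the identity $\mathcal{C}\cap\mathcal{C}^{\perp_l}=\bigoplus_i\gamma_i(\mathcal{C}_i\cap\mathcal{C}_i^{\perp_l})$ is slightly more streamlined than the paper's element-chasing argument, but the underlying idea is identical.
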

	\begin{proof}
		\begin{enumerate}
			\item  Let $ x=\gamma_1x_1+\gamma_2x_2+\gamma_3x_3+\gamma_4x_4\in \mathcal{C}^{\perp_l}$ then $ [y,x]_l=0 $ for all $ y=\gamma_1y_1+\gamma_2y_2+\gamma_3y_3+\gamma_4y_4\in \mathcal{C}.$ Since $ \gamma_i^2=\gamma_i$ and $ \gamma_i\gamma_j=0 $ for $ i\neq j,\  [y,x]_l=\gamma_1\langle y_1,x_1\rangle_l+\gamma_2\langle y_2,x_2\rangle_l+\gamma_3\langle y_3,x_3\rangle_l+\gamma_4\langle y_4,x_4\rangle_l $ implies that $ \langle y_i,x_i\rangle_l=0 $ for all $ y_i\in \mathcal{C}_i $ and $ i=1,2,3,4 $ i.e.,  $ x_i\in \mathcal{C}_i^{\perp_l} $ for $ i=1,2,3,4 .$ Therefore, $ x\in \gamma_1 \mathcal{C}_1^{\perp_l}\oplus\gamma_2 \mathcal{C}_2^{\perp_l}\oplus\gamma_3 \mathcal{C}_3^{\perp_l}\oplus\gamma_4 \mathcal{C}_4^{\perp_l}  .$\\
			Conversely, let $ w=\gamma_1w_1+\gamma_2w_2+\gamma_3w_3+\gamma_4w_4\in  \gamma_1 \mathcal{C}_1^{\perp_l}\oplus\gamma_2 \mathcal{C}_2^{\perp_l}\oplus\gamma_3 \mathcal{C}_3^{\perp_l}\oplus\gamma_4 \mathcal{C}_4^{\perp_l} $, where $ w_i\in \mathcal{C}_i^{\perp_l} $. For any $ y=\gamma_1y_1+\gamma_2y_2+\gamma_3y_3+\gamma_4y_4\in \mathcal{C}$, where $ y_i\in \mathcal{C}_i$,\  $[y,w]_l=\gamma_1\langle y_1,w_1\rangle_l+\gamma_2\langle y_2,w_2\rangle_l+\gamma_3\langle y_3,w_3\rangle_l+\gamma_4\langle y_4,w_4\rangle_l=0 $  implies that $ w\in \mathcal{C}^{\perp_l}.$ Hence  $ \mathcal{C}^{\perp_l}= \gamma_1 \mathcal{C}_1^{\perp_l}\oplus\gamma_2 \mathcal{C}_2^{\perp_l}\oplus\gamma_3 \mathcal{C}_3^{\perp_l}\oplus\gamma_4 \mathcal{C}_4^{\perp_l} .$ 
			\item Let us  suppose that $\mathcal{C}$ is an $ l $-Galois LCD code over $ \mathcal{R} $ that is $ \mathcal{C}\cap \mathcal{C}^{\perp_l}=\{0\}$.  Let $ x_i\in \mathcal{C}_i\cap\mathcal{C}_i^{\perp_l} $  for some $ i=1,2,3,4 $  that is $ \langle y_i,x_i\rangle_l=0  $ for all $ y_i\in \mathcal{C}_i .$
			Now take $ x=\gamma_ix_i \in \mathcal{C}$. Then for any $ y=\gamma_1y_1+\gamma_2y_2+\gamma_3y_3+\gamma_4y_4\in \mathcal{C}$, where $\ y_j\in \mathcal{C}_j \ \forall j=1,2,3,4 $, $ [y,x]_l=[\gamma_1y_1+\gamma_2y_2+\gamma_3y_3+\gamma_4y_4,\gamma_ix_i]_l=\gamma_i\langle y_i,x_i \rangle_l=0$, since $ \gamma_i^2=\gamma_i$ and $ \gamma_i\gamma_j=0 $ for $i\neq j .$ This implies that $ x\in\mathcal{C}\cap \mathcal{C}^{\perp_l}=\{0\} $ i.e., $ x=0 $, consequently, $ x_i=0 .$ Hence $ \mathcal{C}_i $ is the $ 
			l $- Galois LCD code over $ \mathbb{F}_q .$\\
			Conversely, suppose $ \mathcal{C}_i $'s are $ l $-Galois LCD  code over $ \mathbb{F}_q $ for $ i=1,2,3,4 .$ Let $ x=\gamma_1x_1+\gamma_2x_2+\gamma_2x_3+ \gamma_4x_4\in \mathcal{C}\cap \mathcal{C}^{\perp_l}$ then  $ x_i $'s in $ \mathcal{C}_i\cap \mathcal{C}_i^{\perp_l} $ and $\mathcal{C}_i\cap \mathcal{C}_i^{\perp_l}=\{0\} $ implies that $ x=0. $ Thus $\mathcal{C}$ is $ l $-Galois LCD code. 
		\end{enumerate}
The	proof of \ref{b3} can be easily followed by \ref{b1}, so we are omitting the proof. 
	\end{proof}
	\begin{remark}
		Result \ref{b1} and \ref{b3} in above Theorem \ref{bthm} have been done for the Euclidean dual over the ring $ \mathbb{Z}_4 +u\mathbb{Z}_4+v\mathbb{Z}_4+uv\mathbb{Z}_4$ in \cite{someresultsz4}. 	
	\end{remark}
	The  next corollary gives a necessary and sufficient condition on generator matrices  for $ l $-Galois LCD code over $\mathcal{R}.$ 
	\begin{corollary}
		Let $ \mathcal{C}=\bigoplus_{i=1}^4\gamma_i \mathcal{C}_i$ with generator matrix 
		$$G=\begin{bmatrix}
			\gamma_1G_1\\
			\gamma_2G_2\\
			\gamma_3G_3\\
			\gamma_4G_4
		\end{bmatrix},$$
		where $ G_i $ is generator matrix for $ \mathcal{C}_i $ over $ \mathbb{F}_q .$ Then $\mathcal{C}$ is an $ l $-Galois LCD code over $ \mathcal{R} $ if and only if $ G_i(F^{e-l}(G_i))^T $ is non singular matrix for $ i=1,2,3,4 $ over $ \mathbb{F}_q .$ 
	\end{corollary}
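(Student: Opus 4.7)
The plan is to reduce the corollary to the analogous generator matrix criterion for $l$-Galois LCD codes over the finite field $\mathbb{F}_q$ by invoking Theorem \ref{bthm}\eqref{b2}. That theorem already tells us that $\mathcal{C}$ is $l$-Galois LCD over $\mathcal{R}$ if and only if each component code $\mathcal{C}_i$ is $l$-Galois LCD over $\mathbb{F}_q$ for $i=1,2,3,4$. So it suffices to prove the following field-level claim: a linear code $\mathcal{C}_i \subseteq \mathbb{F}_q^n$ with generator matrix $G_i$ is $l$-Galois LCD if and only if $G_i\bigl(F^{e-l}(G_i)\bigr)^T$ is nonsingular over $\mathbb{F}_q$. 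Once this equivalence is established for each $i$, the corollary follows immediately.

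To establish the field-level claim, I would first apply Lemma \ref{lemma3.1}, which holds equally well over $\mathbb{F}_q$, to get $\mathcal{C}_i^{\perp_l} = \bigl(\mathcal{C}_i^{p^{e-l}}\bigr)^{\perp}$, with $\mathcal{C}_i^{p^{e-l}}$ generated by $F^{e-l}(G_i)$. Consequently, $F^{e-l}(G_i)$ serves as a parity-check matrix for $\mathcal{C}_i^{\perp_l}$: a vector $w \in \mathbb{F}_q^n$ belongs to $\mathcal{C}_i^{\perp_l}$ precisely when $F^{e-l}(G_i)\,w^T = 0$. Writing an arbitrary codeword of $\mathcal{C}_i$ as $v = aG_i$ for some $a \in \mathbb{F}_q^{k_i}$, the condition $v \in \mathcal{C}_i \cap \mathcal{C}_i^{\perp_l}$ becomes $F^{e-l}(G_i)\,G_i^T\,a^T = 0$.

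Thus $\mathcal{C}_i \cap \mathcal{C}_i^{\perp_l} = \{0\}$ if and only if the $k_i \times k_i$ matrix $F^{e-l}(G_i)\,G_i^T$ has trivial kernel, which is equivalent to its transpose $G_i\,\bigl(F^{e-l}(G_i)\bigr)^T$ being nonsingular. Combining this with Theorem \ref{bthm}\eqref{b2} finishes the proof.

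The main technical point to get right is tracking the Frobenius exponents and transposes carefully: the $l$-Galois inner product puts $F^l$ on the second argument, yet the dual is described via $F^{e-l}$ through Lemma \ref{lemma3.1}, and these must be matched with the correct side of the product $G_i(F^{e-l}(G_i))^T$. Apart from this bookkeeping, the argument is essentially a direct unwinding of the definitions together with the two previously established results. One could alternatively read the statement off the block-diagonal expression for $G\bigl(F^{e-l}(G)\bigr)^T$ already displayed in the preamble to the theorem, since its nonsingularity in each $\gamma_i$-component amounts exactly to $G_i\bigl(F^{e-l}(G_i)\bigr)^T$ being nonsingular over $\mathbb{F}_q$; but the component-wise approach via Theorem \ref{bthm}\eqref{b2} avoids having to formalize invertibility of matrices over the non-chain ring $\mathcal{R}$.
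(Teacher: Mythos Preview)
Your proposal is correct and follows essentially the same route as the paper: reduce to the component codes via Theorem~\ref{bthm}\eqref{b2}, then use the generator-matrix criterion for $l$-Galois LCD codes over $\mathbb{F}_q$. The only difference is that the paper invokes this field-level criterion by citing \cite[Theorem~2.4]{galoislcd}, whereas you supply a short self-contained proof of it via Lemma~\ref{lemma3.1}.
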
  
	\begin{proof}
		By Theorem \ref{bthm}, $\mathcal{C}$ is an $ l $-Galois LCD code if and only if $ \mathcal{C}_i $'s are $ l $-Galois LCD  code over $ \mathbb{F}_q.$ From \cite[Theorem 2.4]{galoislcd}  , $ \mathcal{C}_i $'s are $ l $-Galois LCD code iff $ G_i(F^{e-l}(G_i))^T $ is non singular matrix over $ \mathbb{F}_q .$
	\end{proof}
	Now, by using the definition of $\rho$ we derive some useful properties on gray image of  $ l $-Galois dual codes over $ 
	\mathcal{\mathcal{R}} $.
	\begin{lemma}
		If $ \mathcal{C} $ is an $ [n,k] $  linear code  over $ \mathcal{R} $ then
		$\rho(\mathcal{C}^{\perp_l})={\rho(\mathcal{C})}^{\perp_l}.$
		\begin{proof} Let $\rho(x)\in\rho(\mathcal{C}^{\perp_l}) $, where $x\in \mathcal{C}^{\perp_l}$ this implies  $[z,x]_l=0 \ \forall z \in \mathcal{C}$. Let $z=\sum_{i=1}^{4}\gamma_iz_i$ and $x=\sum_{i=1}^{4}\gamma_ix_i$, where $z_i\in \mathcal{C}_i$ and $x_i\in \mathcal{C}_i^{\perp_l}$,\ $[z,x]_l=\gamma_1\left\langle z_1,x_1\right\rangle_l+\gamma_2\left\langle z_2,x_2\right\rangle_l+\gamma_3\left\langle z_3,x_3\right\rangle_l+\gamma_4\left\langle z_4,x_4\right\rangle_l =0. $ Hence $\left\langle z_i,x_i\right\rangle_l=0 $ for $i=1,2,3,4.$	Now, $\left\langle \rho(z),\rho(x)\right\rangle _l=\sum_{i=1}^{4}z_i\cdot x_i^{p^l}=\sum_{i=1}^{4}\left\langle z_i,x_i\right\rangle_l=0 \ \ \forall  \rho(z)\in \rho(\mathcal{C})$ this implies $\rho(x)\in  {\rho(\mathcal{C})}^{\perp_l}$.
			Therefore, $\rho(\mathcal{C}^{\perp_l})\subseteq{\rho(\mathcal{C})}^{\perp_l}.$
			
			Conversely, the cardinality of $\rho(\mathcal{C}^{\perp_l})$ equal to $\mathcal{C}^{\perp_l}$ that is $|\rho(\mathcal{C}^{\perp_l})|=\frac{q^{4n}}{|\mathcal{C}|}.$ Moreover, $|\rho(\mathcal{C})^{\perp_l}|=\frac{q^{4n}}{|\rho(\mathcal{C})|}=\frac{q^{4n}}{|\mathcal{C}|}$.
			Hence $ \rho(\mathcal{C}^{\perp_l})={\rho(\mathcal{C})}^{\perp_l}. $
		\end{proof}
	\end{lemma}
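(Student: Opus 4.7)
The plan is to prove the set equality $\rho(\mathcal{C}^{\perp_l}) = \rho(\mathcal{C})^{\perp_l}$ by establishing one inclusion directly from the inner-product definition and then closing the gap with a cardinality argument.

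First I would take an arbitrary $x \in \mathcal{C}^{\perp_l}$. Using Theorem \ref{bthm}(\ref{b1}), I can decompose $x = \sum_{i=1}^{4} \gamma_i x_i$ with each $x_i \in \mathcal{C}_i^{\perp_l}$. Similarly, every $z \in \mathcal{C}$ decomposes as $z = \sum_{i=1}^{4} \gamma_i z_i$ with $z_i \in \mathcal{C}_i$. Under the equivalent form of the gray map introduced in Section \ref{pre}, $\rho(z) = (z_1,z_2,z_3,z_4)$ and $\rho(x)=(x_1,x_2,x_3,x_4)$ in $\mathbb{F}_q^{4n}$. Expanding $\langle \rho(z),\rho(x)\rangle_l$ block by block, I expect
\[
\langle \rho(z),\rho(x)\rangle_l \;=\; \sum_{i=1}^{4}\langle z_i,x_i\rangle_l,
\]
which vanishes since each summand does by the choice of $x_i$. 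This gives $\rho(x) \in \rho(\mathcal{C})^{\perp_l}$ and hence $\rho(\mathcal{C}^{\perp_l}) \subseteq \rho(\mathcal{C})^{\perp_l}$.

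For the reverse inclusion I would invoke cardinalities. Since $\rho$ is a bijection, $|\rho(\mathcal{C}^{\perp_l})| = |\mathcal{C}^{\perp_l}|$, and the Frobenius-ring identity recalled just before Theorem \ref{bthm} yields $|\mathcal{C}^{\perp_l}| = |\mathcal{R}^n|/|\mathcal{C}| = q^{4n}/|\mathcal{C}|$. On the other side, $\rho(\mathcal{C})$ is an $\mathbb{F}_q$-linear code of length $4n$ with $|\rho(\mathcal{C})|=|\mathcal{C}|$, so $|\rho(\mathcal{C})^{\perp_l}| = q^{4n}/|\mathcal{C}|$ as well. The two cardinalities agreeing, the first-step inclusion becomes equality.

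The main obstacle, and really the only place that requires care, is the inner-product identity $\langle\rho(z),\rho(x)\rangle_l = \sum_i\langle z_i,x_i\rangle_l$. One must verify that the Frobenius $F^l$ on $\mathcal{R}$ acts on the decomposition by $F^l(\sum_i\gamma_i x_i)=\sum_i\gamma_i x_i^{p^l}$, so that the orthogonality relations $\gamma_i\gamma_j=0$ and $\gamma_i^2=\gamma_i$ collapse the double sum $\sum_{i,j}\gamma_i\gamma_j\langle z_i,x_j\rangle_l$ to the diagonal, and that the $l$-Galois pairing on $\mathbb{F}_q^{4n}$ pairs the $i$-th block of $\rho(z)$ against the $i$-th block of $\rho(x)$ through the same $p^l$-power twist. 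Once these compatibilities are spelled out, the rest is bookkeeping and the cardinality step is automatic.
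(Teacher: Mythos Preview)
Your proposal is correct and follows essentially the same approach as the paper: establish the inclusion $\rho(\mathcal{C}^{\perp_l})\subseteq\rho(\mathcal{C})^{\perp_l}$ via the block decomposition and the identity $\langle\rho(z),\rho(x)\rangle_l=\sum_{i=1}^4\langle z_i,x_i\rangle_l$, then finish with the cardinality count $q^{4n}/|\mathcal{C}|$ on both sides. The only cosmetic difference is that you invoke Theorem~\ref{bthm}(\ref{b1}) to place $x_i\in\mathcal{C}_i^{\perp_l}$ directly, whereas the paper also rederives $\langle z_i,x_i\rangle_l=0$ from $[z,x]_l=0$; both routes are equivalent.
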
	
	\begin{lemma}\label{lemma 2.3}	
		If $ \mathcal{C} $ is a linear code over $ \mathcal{R} $ then
		$\rho(\mathcal{C}\cap \mathcal{C}^{\perp_l})=\rho(\mathcal{C})\cap\rho(\mathcal{C}^{\perp_l}).$
		\begin{proof}
			Let $ \rho(x)\in\rho(\mathcal{C}\cap \mathcal{C}^{\perp_l})$ for some $x\in \mathcal{C}\cap \mathcal{C}^{\perp_l}$ this implies  $\rho(x)\in\rho(\mathcal{C})\cap\rho(\mathcal{C}^{\perp_l})$. Hence $ \rho(\mathcal{C}\cap \mathcal{C}^{\perp_l})\subseteq\rho(\mathcal{C})\cap\rho(\mathcal{C}^{\perp_l}).$
			Conversly, let $ y \in \rho(\mathcal{C})\cap\rho(\mathcal{C}^{\perp_l}) $ this implies $ y\in \rho(\mathcal{C}) \ \text{and} \  y\in \rho(\mathcal{C}^{\perp_l}) $. Since $\rho$ is bijective function so there is unique  $x\in \mathcal{C}\cap \mathcal{C}^{\perp_l}$ such that $\rho(x)=y$. Hence $  \rho(\mathcal{C})\cap\rho(\mathcal{C}^{\perp_l})\subseteq \rho(\mathcal{C}\cap \mathcal{C}^{\perp_l}) $.
			Therefore, $\rho(\mathcal{C}\cap \mathcal{C}^{\perp_l})=\rho(\mathcal{C})\cap\rho(\mathcal{C}^{\perp_l}).$
			
		\end{proof}
	\end{lemma}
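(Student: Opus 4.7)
The plan is to prove both containments of the set equality $\rho(\mathcal{C}\cap \mathcal{C}^{\perp_l})=\rho(\mathcal{C})\cap\rho(\mathcal{C}^{\perp_l})$ by exploiting the fact that $\rho$ is a bijection from $\mathcal{R}^n$ to $\mathbb{F}_q^{4n}$. No inner-product gymnastics should be required: the statement is essentially the general set-theoretic fact that images under a bijection commute with finite intersections.

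For the forward containment $\rho(\mathcal{C}\cap \mathcal{C}^{\perp_l})\subseteq\rho(\mathcal{C})\cap\rho(\mathcal{C}^{\perp_l})$, I would pick an arbitrary $y\in\rho(\mathcal{C}\cap \mathcal{C}^{\perp_l})$, write $y=\rho(x)$ with $x\in \mathcal{C}\cap \mathcal{C}^{\perp_l}$, and note that $x\in\mathcal{C}$ gives $y\in\rho(\mathcal{C})$ while $x\in\mathcal{C}^{\perp_l}$ gives $y\in\rho(\mathcal{C}^{\perp_l})$. This half uses nothing but the definition of image.

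For the reverse containment $\rho(\mathcal{C})\cap\rho(\mathcal{C}^{\perp_l})\subseteq\rho(\mathcal{C}\cap \mathcal{C}^{\perp_l})$, I would take $y\in\rho(\mathcal{C})\cap\rho(\mathcal{C}^{\perp_l})$, so $y=\rho(x_1)$ for some $x_1\in\mathcal{C}$ and $y=\rho(x_2)$ for some $x_2\in\mathcal{C}^{\perp_l}$. Because $\rho$ is injective, $x_1=x_2$; calling this common element $x$, we have $x\in\mathcal{C}\cap\mathcal{C}^{\perp_l}$ and $\rho(x)=y$, hence $y\in\rho(\mathcal{C}\cap\mathcal{C}^{\perp_l})$.

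There is really no hard step here; the only point that could be slipped over is the use of injectivity of $\rho$ in the second containment, which is what rules out the generic subset relation $f(A\cap B)\subseteq f(A)\cap f(B)$ being strict. If one preferred a slicker write-up, one could simply invoke that $\rho$ is a bijection and that for any bijection the image map respects intersections, but spelling out the two inclusions as above keeps the proof self-contained and matches the style used in the preceding lemma about $\rho(\mathcal{C}^{\perp_l})={\rho(\mathcal{C})}^{\perp_l}$.
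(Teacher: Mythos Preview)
Your proof is correct and follows essentially the same two-inclusion argument as the paper, using bijectivity of $\rho$ for the reverse containment. If anything, your version is slightly more explicit in writing out the preimages $x_1,x_2$ and invoking injectivity to identify them, whereas the paper compresses this into a single line.
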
		
	\begin{theorem}\label{gthm}
		A linear code $\mathcal{C}$ is an $ l $-Galois LCD code over $ \mathcal{R} $ if and only if $ \rho(\mathcal{C})$ is an $ l $-Galois LCD code over $ \mathbb{F}_q .$
	\end{theorem}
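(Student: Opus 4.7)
The proof plan is essentially a one-line chain of equivalences obtained by combining the two lemmas that immediately precede the theorem, together with the fact that $\rho$ is a bijective $\mathbb{F}_q$-linear map sending the zero vector to the zero vector.

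First I would write out what each side of the stated equivalence unpacks to: $\mathcal{C}$ is $l$-Galois LCD over $\mathcal{R}$ means $\mathcal{C}\cap\mathcal{C}^{\perp_l}=\{0\}$, while $\rho(\mathcal{C})$ is $l$-Galois LCD over $\mathbb{F}_q$ means $\rho(\mathcal{C})\cap\rho(\mathcal{C})^{\perp_l}=\{0\}$. Because $\rho:\mathcal{R}^n\to\mathbb{F}_q^{4n}$ is a bijection (and hence injective) with $\rho(0)=0$, a subset $S\subseteq\mathcal{R}^n$ satisfies $S=\{0\}$ if and only if $\rho(S)=\{0\}$. Applying this to $S=\mathcal{C}\cap\mathcal{C}^{\perp_l}$ gives the first step of the chain.

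Next I would invoke Lemma \ref{lemma 2.3} to rewrite $\rho(\mathcal{C}\cap\mathcal{C}^{\perp_l})$ as $\rho(\mathcal{C})\cap\rho(\mathcal{C}^{\perp_l})$, and then the preceding lemma (which gives $\rho(\mathcal{C}^{\perp_l})=\rho(\mathcal{C})^{\perp_l}$) to replace the second factor and obtain $\rho(\mathcal{C})\cap\rho(\mathcal{C})^{\perp_l}$. Stringing the three equivalences together yields
\[
\mathcal{C}\cap\mathcal{C}^{\perp_l}=\{0\}\ \Longleftrightarrow\ \rho(\mathcal{C}\cap\mathcal{C}^{\perp_l})=\{0\}\ \Longleftrightarrow\ \rho(\mathcal{C})\cap\rho(\mathcal{C}^{\perp_l})=\{0\}\ \Longleftrightarrow\ \rho(\mathcal{C})\cap\rho(\mathcal{C})^{\perp_l}=\{0\},
\]
which is precisely the conclusion of Theorem \ref{gthm}.

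There is no real obstacle here since the substantive content has already been absorbed into the two lemmas: the commutation of $\rho$ with intersection is handled by Lemma \ref{lemma 2.3}, and the commutation of $\rho$ with $l$-Galois duality is handled by the previous lemma. The only subtlety worth flagging in the writeup is that one is applying bijectivity of $\rho$ restricted to a subset containing $0$, so the implication $\rho(S)=\{0\}\Rightarrow S=\{0\}$ uses injectivity rather than anything deeper. Accordingly, my proof would be only a few lines long and would cite both lemmas explicitly.
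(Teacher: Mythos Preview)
Your proposal is correct and follows essentially the same approach as the paper: both arguments combine the lemma $\rho(\mathcal{C}^{\perp_l})=\rho(\mathcal{C})^{\perp_l}$ with Lemma~\ref{lemma 2.3} and the bijectivity of $\rho$ to pass between $\mathcal{C}\cap\mathcal{C}^{\perp_l}=\{0\}$ and $\rho(\mathcal{C})\cap\rho(\mathcal{C})^{\perp_l}=\{0\}$. The only cosmetic difference is that you package the argument as a single chain of equivalences, whereas the paper writes out the two implications separately.
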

	\begin{proof}
		Suppose $\mathcal{C}$ is an $ l $-Galois LCD code over $ \mathcal{R} $ that is  $\mathcal{C}\cap \mathcal{C}^{\perp_l}=\{0\}$. From  Lemma \ref{lemma 2.3}, we get $\rho(\mathcal{C})\cap\rho(\mathcal{C})^{\perp_l}=\{0\}$. Conversely, if $ \rho(\mathcal{C})$ is $ l $-Galois LCD code then $$\{0\}=\rho(\mathcal{C})\cap\rho(\mathcal{C})^{\perp_l}=\rho(\mathcal{C})\cap\rho(\mathcal{C}^{\perp_l})=\rho(\mathcal{C}\cap \mathcal{C}^{\perp_l})$$
		this implies $\mathcal{C}\cap \mathcal{C}^{\perp_l}=\{0\}$. Therefore, $\mathcal{C}$ is an $ l $-Galois LCD code over $ \mathcal{R}. $
	\end{proof}
	
	\section{ Construction of  Galois LCD code  equivalent to linear code} \label{construction}
	Here, we discuss a construction of the Euclidean and the $ l $-Galois LCD codes over $ \mathcal{R} $ with the help of its components codes over $\mathbb{F}_q$.  We have shown that for every linear code $\mathcal{C}$ there exists  a Euclidean LCD code and an $ l $-Galois LCD code which are equivalent to $ \mathcal{C} .$  
	
	For two integers $ m\geq1 $ and $ 0\leq w\leq m $, let $b$  be an element in $ \mathbb{F}_q^m $ with the Hamming weight $ w .$  Define  support of $b$ is the set, $ S= \{i_1,i_2,\dots, i_w\} ,$ of indices where components of $b$ are non zero. Denote $ m\times m $ diagonal matrix whose entries are $ b_1,b_2,\dots , b_m $ by $ diag_m[b] .$ Let $ P  $ be an $ m\times m $ square matrix over $ \mathbb{F}_q .$ Define $ P_S $ as the submatrix of $ P $ obtained by deleting the $ i_1,i_2,\dots,i_w $-th columns and rows of $ P .$ We write $ P_S=I $  if $ S=\{1,2,\dots ,m\} $  and $ P_{\emptyset} =P.$ Then from \cite{cafinite}, we have the following lemma.
	\begin{lemma}\label{ldet}
		For a matrix $ P $ of order $ m\times m $ over $ \mathbb{F}_q $ and an integer $ t $ such that $ 0\leq t\leq m-1 .$ Assume that for any $ J\subseteq \{1,2,\dots ,m\} $ with $ 0\leq| J|\leq t ,\  det(P_J)=0 .$ Then for every element $b \in \mathbb{F}_q^m$ with support $S$ and  Hamming weight $ w $ such that $ 1\leq w\leq t+1,$  we have
		$$det \left( P+diag_m[b]\right) =\left(\prod_{i\in S}b_i\right) det(P_S).$$    
	\end{lemma}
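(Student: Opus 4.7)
The plan is to prove the formula by multilinearity of the determinant in the rows, then use the vanishing hypothesis to kill all terms except the one indexed by the full support $S$.

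First I would write the $i$-th row of $P+\operatorname{diag}_m[b]$ as $p_i+b_i e_i$, where $p_i$ is the $i$-th row of $P$ and $e_i$ is the $i$-th standard basis row; note that $b_i=0$ precisely when $i\notin S$. Expanding the determinant by multilinearity in each row produces
\[
\det(P+\operatorname{diag}_m[b])=\sum_{T\subseteq\{1,\dots,m\}}\det(N_T),
\]
where $N_T$ has $i$-th row equal to $b_ie_i$ if $i\in T$ and $p_i$ otherwise. Whenever $T\not\subseteq S$, some row of $N_T$ is the zero vector (because $b_i=0$ for $i\notin S$), so $\det(N_T)=0$. Hence the sum collapses to a sum over $T\subseteq S$.

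Next I would evaluate $\det(N_T)$ for $T\subseteq S$ by iterated Laplace expansion along the rows indexed by $T$. Each such row has a single nonzero entry $b_i$ at position $(i,i)$, so the corresponding cofactor sign is $(-1)^{i+i}=+1$, and after peeling off one row-column pair the remaining $(m-1)\times(m-1)$ submatrix still has the same "diagonal singleton" structure on the rows in $T\setminus\{i\}$. Iterating gives
\[
\det(N_T)=\Bigl(\prod_{i\in T}b_i\Bigr)\det(P_T),
\]
where $P_T$ is precisely the submatrix of $P$ obtained by deleting the rows and columns indexed by $T$, matching the notation of the lemma.

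Finally I would apply the hypothesis: for any $T\subsetneq S$ we have $|T|\le w-1\le t$, so $\det(P_T)=0$. The only surviving contribution is $T=S$, giving
\[
\det(P+\operatorname{diag}_m[b])=\Bigl(\prod_{i\in S}b_i\Bigr)\det(P_S),
\]
as claimed. The only delicate point is verifying that the cumulative sign in the iterated Laplace expansion is $+1$; this is what forces us to use row-column deletion on the same index set $T$ (not a column-permuted version), and it is cleanest to present it by induction on $|T|$ rather than tracking indices explicitly.
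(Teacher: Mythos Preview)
Your proof is correct. The multilinearity expansion, the reduction to $T\subseteq S$, the identification $\det(N_T)=(\prod_{i\in T}b_i)\det(P_T)$, and the use of the hypothesis to kill all terms with $|T|\le t$ are all valid. One small remark: the sign verification you flag as ``delicate'' is most transparently handled not by iterated Laplace expansion but by simultaneously permuting rows and columns so that the indices in $T$ come first; since the same permutation is applied to rows and columns the sign cancels, and the resulting matrix is block lower-triangular with $\operatorname{diag}(b_i)_{i\in T}$ in the upper-left and $P_T$ in the lower-right.

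As for comparison with the paper: the paper does not supply its own proof of this lemma. It is quoted directly from the reference \cite{cafinite} (introduced with ``Then from \cite{cafinite}, we have the following lemma''), so there is no in-paper argument to compare against. Your self-contained argument is a perfectly good replacement for the citation.
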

	Suppose  $ \alpha=(\alpha_1,\alpha_2,\dots , \alpha_n)\in \mathcal{R}^n$ is a fixed element, where $ \alpha_j=\sum_{i=1}^{4}\gamma_i\alpha_{ji}, \ \alpha_{ji}\in\mathbb{F}_q$ for $ j=1,2,\dots,n .$ Define $$ \mathcal{C}^{\alpha}=\{\alpha\cdot c|\ c\in \mathcal{C}\}=\{(\alpha_1c_1,\alpha_2c_2,\dots, \alpha_nc_n)| \ (c_1,c_2,\dots,c_n)\in \mathcal{C}\} .$$ Clearly, $ \mathcal{C}^{\alpha} $ is a linear code over $ \mathcal{R} .$
	Let $ G=\begin{bmatrix}
		\gamma_1G_1\\
		\gamma_2G_2\\
		\gamma_3G_3\\
		\gamma_4G_4
	\end{bmatrix} $ be a generator matrix for $\mathcal{C}$ then  $ G^{\alpha}=\begin{bmatrix}
		\gamma_1G_1^{\alpha_1'}\\
		\gamma_2G_2^{\alpha_2'}\\
		\gamma_3G_3^{\alpha_3'}\\
		\gamma_4G_4^{\alpha_4'}
	\end{bmatrix} $ is a generator matrix for $ \mathcal{C}^{\alpha} $  obtained by multiplying $ j $-th column of $ G $ by $ \alpha_j $, where $ G_i^{\alpha_i'} $ is the matrix obtained by multiplying $ j $-th column of $ G_i $ by $ \alpha_{ji} $ and $\alpha_i'=(\alpha_{1i},\alpha_{2i},\dots,\alpha_{ni})\in \mathbb{F}_q^n $ for $ i=1,2,3,4. $
	\begin{remark}
		Note that, for $\alpha=(\alpha_1,\alpha_2,\dots , \alpha_n)\in \mathcal{R}^n$ such that $ \alpha_j\neq 0 $ for each $ 1\leq j\leq n $, $\mathcal{C}$ and $ \mathcal{C}^{\alpha} $ are equivalent codes over $ \mathcal{R} .$
	\end{remark}
	The following theorem  gives a construction of  Euclidean LCD codes over $ \mathcal{R} $ from linear codes over $ \mathcal{R} .$ We denote the parameters of component codes $ C_i $ by $ [n,k_i,d_i] $ for $ i=1,2,3,4. $
	\begin{theorem}\label{thm 4.2}
		All notations are as above. Let $ \mathcal{C}=\bigoplus_{i=1}^{4}\gamma_i\mathcal{C}_i$ be an $ [n,k,d] $ linear code over $ \mathcal{R} ,$ where $ \mathcal{C}_i $'s are component codes over $ \mathbb{F}_q $ with generator matrices $ G_i=[I_{k_i}:M_i] $. Let $ P_i=G_iG_i^T$ and $ t_i\leq k_i-1 $ be non-negative integers such that $ det((P_i)_{S_i})=0 $ for any  $ S_i \subseteq \{1,2,\dots, k_i\} $ with $ 0\leq |S_i|\leq t_i $ and assume there exist $ R_i\subseteq\{1,2,\dots k_i\} $ with cardinality $ t_i+1 $ such that $ det((P_i)_{R_i}) \neq 0.$ Suppose $\alpha\in \mathcal{R}^n$   such that $ \alpha_{ji}\in \mathbb{F}_q\setminus\{+1,-1\} $ if  $ j\in R_i $ and $ \alpha_{ji}\in \{+1,-1\} $ if  $ j\in\{1,2,\dots n\}\setminus R_i, $ for $\ i=1,2,3,4.$ Then $ \mathcal{C}^{\alpha} $ is a Euclidean LCD code over $ \mathcal{R} .$
	\end{theorem}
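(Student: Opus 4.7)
The plan is to reduce the problem to a component-wise statement over $\mathbb{F}_q$ and then invoke Lemma \ref{ldet} on each component. By Theorem \ref{bthm}\ref{b2} (with $l=0$) together with the corollary that follows it, $\mathcal{C}^{\alpha}$ is Euclidean LCD over $\mathcal{R}$ if and only if each of its component codes $\mathcal{C}_i^{\alpha_i'}$ is Euclidean LCD over $\mathbb{F}_q$, i.e.\ if and only if the matrix $G_i^{\alpha_i'}(G_i^{\alpha_i'})^T$ is non-singular over $\mathbb{F}_q$ for every $i\in\{1,2,3,4\}$. So the whole theorem collapses to proving non-singularity of these four Gram matrices.

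Next I would compute $G_i^{\alpha_i'}(G_i^{\alpha_i'})^T$ explicitly. Writing $D_i=\mathrm{diag}(\alpha_{1i},\ldots,\alpha_{ni})$, one has $G_i^{\alpha_i'}=G_iD_i$, hence
\[
G_i^{\alpha_i'}(G_i^{\alpha_i'})^T = G_i D_i^2 G_i^T.
\]
Because $R_i\subseteq\{1,\ldots,k_i\}$ indexes only columns of the identity block of $G_i=[I_{k_i}:M_i]$, every $\alpha_{ji}$ with $j>k_i$ lies in $\{\pm 1\}$, so $\alpha_{ji}^2=1$ there. Splitting $D_i$ accordingly into a $k_i\times k_i$ block $D_i^{(1)}$ and an $(n-k_i)\times(n-k_i)$ block $D_i^{(2)}=\pm I$, one finds
\[
G_i D_i^2 G_i^T = (D_i^{(1)})^2 + M_iM_i^T = P_i + \mathrm{diag}_{k_i}[b_i],
\]
where $b_i\in\mathbb{F}_q^{k_i}$ has entries $b_{ji}=\alpha_{ji}^2-1$. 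Since $x^2=1$ in $\mathbb{F}_q$ iff $x=\pm 1$, the support of $b_i$ is exactly $R_i$, and its Hamming weight is $|R_i|=t_i+1$.

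Now I would apply Lemma \ref{ldet} with $m=k_i$, $P=P_i$, $t=t_i$, and $b=b_i$. The hypotheses of the theorem supply precisely what is needed: $\det((P_i)_J)=0$ for every $J\subseteq\{1,\ldots,k_i\}$ with $|J|\le t_i$. The conclusion of the lemma then gives
\[
\det\bigl(G_i D_i^2 G_i^T\bigr)=\det\bigl(P_i+\mathrm{diag}_{k_i}[b_i]\bigr)=\Bigl(\prod_{j\in R_i}b_{ji}\Bigr)\det((P_i)_{R_i}).
\]
Since each $\alpha_{ji}\notin\{\pm 1\}$ for $j\in R_i$ we have $b_{ji}\neq 0$, and by hypothesis $\det((P_i)_{R_i})\neq 0$, so the determinant is non-zero. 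Doing this for $i=1,2,3,4$ yields the four required non-singularities and finishes the proof.

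The only genuine subtlety I foresee is the bookkeeping in the second step: one must observe that the restriction $R_i\subseteq\{1,\ldots,k_i\}$ is exactly what lets the contribution from the $M_i$-columns cancel (because their $\alpha_{ji}^2=1$), turning the congruence $G_iD_i^2G_i^T$ into the sum $P_i+\mathrm{diag}_{k_i}[b_i]$ that Lemma \ref{ldet} can handle. Once that identity is in hand, the rest is mechanical.
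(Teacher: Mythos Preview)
Your proof is correct and follows the same overall skeleton as the paper: reduce to the four component codes via Theorem~\ref{bthm} and then verify each component is Euclidean LCD. The difference is in how the component step is dispatched. The paper first writes out the decomposition $\mathcal{C}^{\alpha}=\bigoplus_{i=1}^{4}\gamma_i\mathcal{C}_i^{\alpha_i'}$ explicitly and then simply invokes \cite[Theorem~11]{cafinite} to conclude that each $\mathcal{C}_i^{\alpha_i'}$ is Euclidean LCD over $\mathbb{F}_q$, without further computation. You instead prove that component statement from scratch: you compute $G_i^{\alpha_i'}(G_i^{\alpha_i'})^T=P_i+\mathrm{diag}_{k_i}[b_i]$ using the standard form $G_i=[I_{k_i}:M_i]$ and the fact that $R_i\subseteq\{1,\dots,k_i\}$, and then apply Lemma~\ref{ldet} directly. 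This is exactly the argument the paper itself later gives in Theorem~\ref{kgaloisfield} for the $l$-Galois case, specialised to $l=0$; so your route is more self-contained (it avoids an external black-box citation and uses only the lemma already stated in the paper), at the cost of a short extra computation. Either way the logical content is the same.
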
	
	\begin{proof}
		Let $\alpha=(\alpha_1,\alpha_2,\dots , \alpha_n)\in \mathcal{R}^n$ and $ c=(c_1,c_2,\dots,c_n)\in \mathcal{C} .$	Note that, for $\alpha_j,c_j\in \mathcal{R}$, we write $ \alpha_j=\sum_{i=1}^{4}\gamma_i\alpha_{ji}, \ \alpha_{ji}\in\mathbb{F}_q$ and $ c_j=\sum_{i=1}^{4} \gamma_i c_{ji} ,\ c_{ji}\in \mathcal{C}_i$  then $ \alpha_jc_j=\left(\sum_{i=1}^{4}\gamma_i\alpha_{ji}\right)\left(\sum_{i=1}^{4} \gamma_i c_{ji}\right)=\sum_{i=1}^{4}\gamma_i\alpha_{ji}c_{ji} $, since $ \gamma_i\gamma_m=0 $ for $ i\neq m $ and $ \gamma_i^2=\gamma_i $ for $ i,m=1,2,3,4 $ and $j=1,2,\dots,n.$ Now,
		\begin{align*}
			\mathcal{C}^{\alpha}=&\{(\alpha_1c_1,\alpha_2c_2,\dots, \alpha_nc_n)| \ (c_1,c_2,\dots,c_n)\in \mathcal{C}\}\\
			=&\left\lbrace \left(\sum_{i=1}^{4}\gamma_i \alpha_{1i}c_{1i},\sum_{i=1}^{4}\gamma_i \alpha_{2i}c_{2i},\dots ,\sum_{i=1}^{4}\gamma_i \alpha_{ni}c_{ni}\right)\big\rvert \  (c_1,c_2,\dots,c_n)\in \mathcal{C} 
			\right\rbrace  \\
			=&\left\lbrace \sum_{i=1}^{4}\gamma_i(\alpha_{1i}c_{1i},\alpha_{2i}c_{2i},\alpha_{3i}c_{3i},\dots, \alpha_{ni}c_{ni})\ \big\rvert  \  (c_1,c_2,\dots,c_n)\in \mathcal{C}
			\right\rbrace  \\
			=&\bigoplus_{i=1}^{4}\gamma_i \mathcal{C}_i^{\alpha_i'}   .
		\end{align*}
		Where $\mathcal{C}_i^{\alpha_i'}=\{(\alpha_{1i}c_{1i},\alpha_{2i}c_{2i},\alpha_{3i}c_{3i},\dots, \alpha_{ni}c_{ni})| \ (c_{1i},c_{2i},\dots,c_{ni})\in \mathcal{C}_i\}$ and $\alpha_i'=(\alpha_{1i},\alpha_{2i},\dots,\alpha_{ni})$. It is clear that $\mathcal{C}_i^{\alpha_i'}$'s are linear codes over $\mathbb{F}_q$ with generator matrices $ G_i^{\alpha_i'} $ for $i=1,2,3,4.$
		Also, $ \alpha=\sum_{i=1}^{4}\gamma_i \alpha_i' .$ 	 
		From  \cite[Theorem 11]{cafinite} , each $ \mathcal{C}_i^{\alpha_i'} $ is Euclidean LCD code over $ \mathbb{F}_q $ and by Theorem \ref{bthm} (take $ l=0 $) , $ \mathcal{C}^{\alpha} $ is a Euclidean LCD code.
	\end{proof}
	Next, we use the technique described in \cite{cafinite} to establish the existence of  $ \alpha $ for which $ \mathcal{C}^{\alpha} $ is a Euclidean LCD  code for a given linear code $\mathcal{C}$ over $ \mathcal{R}. $ 	
	\begin{corollary}\label{Eequiv}
		Let  $ \mathbb{F}_q\ (q>3) $ be a finite field and $\mathcal{C}$ be an $ [n,k,d] $ linear code over $ \mathcal{R} .$ Then $ \mathcal{C}^{\alpha} $ is an $ [n,k,d] $ Euclidean LCD code over $ \mathcal{R} $ for some $ \alpha =(\alpha_1,\alpha_2,\dots , \alpha_n)$ in $ \mathcal{R}^n $
		with $ \alpha_j\neq 0 $ for  $ 1\leq j\leq n .$
	\end{corollary}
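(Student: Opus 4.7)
The plan is to reduce the problem to the four component codes $\mathcal{C}_1,\mathcal{C}_2,\mathcal{C}_3,\mathcal{C}_4$ and then apply Theorem \ref{thm 4.2} via a careful choice of $\alpha$. By Theorem \ref{bthm}(\ref{b2}) with $l=0$, it suffices to produce, for each $i$, a vector $\alpha_i'\in(\mathbb{F}_q^{\ast})^n$ such that $\mathcal{C}_i^{\alpha_i'}$ is a Euclidean LCD code over $\mathbb{F}_q$ with the same parameters $[n,k_i,d_i]$ as $\mathcal{C}_i$. Setting $\alpha_j=\sum_{i=1}^{4}\gamma_i\alpha_{ji}$, every $\alpha_j$ will then have all four $\mathbb{F}_q$-components nonzero, which makes $\alpha_j$ a unit in $\mathcal{R}$ (its inverse is $\sum_i\gamma_i\alpha_{ji}^{-1}$); this forces $\mathcal{C}^{\alpha}$ to be equivalent to $\mathcal{C}$ and therefore to have the same parameters $[n,k,d]$.

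For each $i$, after putting $G_i$ into the standard form $[I_{k_i}:M_i]$, set $P_i=G_iG_i^T$ and select the threshold $t_i$ required by Theorem \ref{thm 4.2}. Concretely, let $t_i$ be the largest integer in $\{-1,0,1,\ldots,k_i-1\}$ such that $\det((P_i)_{J})=0$ for every $J\subseteq\{1,\ldots,k_i\}$ with $|J|\leq t_i$, where $t_i=-1$ is taken to mean $\det(P_i)\ne 0$, i.e., $\mathcal{C}_i$ is already Euclidean LCD. Such a $t_i$ exists and satisfies $t_i\leq k_i-1$ because $(P_i)_{\{1,\ldots,k_i\}}=I$ is nonsingular by the convention in Lemma \ref{ldet}, so the chain of vanishing determinants must break before reaching $|J|=k_i$. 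When $t_i\geq 0$, maximality yields an $R_i\subseteq\{1,\ldots,k_i\}$ with $|R_i|=t_i+1$ and $\det((P_i)_{R_i})\ne 0$. When $t_i=-1$, take $R_i=\emptyset$; any diagonal scaling by nonzero elements preserves the non-singularity of $P_i$, since $G_i^{\alpha_i'}(G_i^{\alpha_i'})^T=\mathrm{diag}(\alpha_i')\,P_i\,\mathrm{diag}(\alpha_i')$.

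The hypothesis $q>3$ enters precisely at the choice of the entries of $\alpha_i'$: the set $\mathbb{F}_q\setminus\{0,+1,-1\}$ is nonempty exactly when $q>3$. Pick any element of this set to serve as $\alpha_{ji}$ for each $j\in R_i$, and set $\alpha_{ji}=1$ for $j\notin R_i$. Then every $\alpha_{ji}$ is a nonzero element of $\mathbb{F}_q$, so each $\alpha_j$ is a unit in $\mathcal{R}$ and in particular $\alpha_j\ne 0$. Theorem \ref{thm 4.2} then applies to show $\mathcal{C}^{\alpha}=\bigoplus_{i=1}^{4}\gamma_i\mathcal{C}_i^{\alpha_i'}$ is a Euclidean LCD code over $\mathcal{R}$, and equivalence with $\mathcal{C}$ gives the parameter claim. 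The only real obstacle is the simultaneous construction of $t_i$ and $R_i$ for each component while keeping all $\alpha_{ji}$ nonzero; this is exactly what the "extra room" $q-3\geq 1$ buys us, since the forbidden values $0,+1,-1$ are three in number.
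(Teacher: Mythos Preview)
Your argument is correct and follows the same route as the paper---reduce to the component codes via Theorem~\ref{bthm}, invoke the $(t_i,R_i)$ machinery underlying Theorem~\ref{thm 4.2} componentwise, and reassemble---and is in fact more careful, since you repair all four components simultaneously whereas the paper's proof only explicitly fixes a single non-LCD $\mathcal{C}_i$. One small slip that does not affect the outcome: the side remark that ``any diagonal scaling by nonzero elements preserves the non-singularity of $P_i$'' is false in general (and the displayed formula $G_i^{\alpha_i'}(G_i^{\alpha_i'})^T=\mathrm{diag}(\alpha_i')\,P_i\,\mathrm{diag}(\alpha_i')$ is dimensionally off, since $P_i$ is $k_i\times k_i$ while the column scaling is by an $n\times n$ diagonal), but since your actual choice when $R_i=\emptyset$ is $\alpha_i'=(1,\dots,1)$, giving $\mathcal{C}_i^{\alpha_i'}=\mathcal{C}_i$, nothing depends on it.
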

	\begin{proof}
		Let $ \mathcal{C}=\bigoplus_{i=1}^4\gamma_i\mathcal{C}_i $ be a linear code over $ \mathcal{R} $. If $\mathcal{C}$ is a  Euclidean LCD code then we can take $ \alpha=(\alpha_1,\alpha_2,\dots , \alpha_n) \in \mathcal{R}^n$ such that $ \alpha_j= \gamma_1+\gamma_2 +\gamma_3+\gamma_4$ for $ 1\leq j\leq n. $	Then $ \mathcal{C}^{\alpha}=\mathcal{C} $, Euclidean LCD code over $ \mathcal{R} .$
		
		For the case when $\mathcal{C}$ is not a Euclidean LCD code then by Theorem \ref{bthm}, for some $ i=1,2,3,4,\   \mathcal{C}_i $ is not Euclidean LCD. Let $ G_i  $ be generator matrix for $ \mathcal{C}_i $ then $ det(G_iG_i^T)=0 .$ Suppose $ P_i =G_iG_i^T$, then there exists an integer $ t_i\geq0 $ and  $ R_i \subseteq \{1,2,\dots, k_i \} $ with cardinality $ | R_i|=t_i+1 $ such that $ det((P_i)_{R_i})\neq 0 $ and $ det((P_i)_{S_i}) =0$ for any $ S_i \subseteq \{1,2,\dots ,k_i\} $ with  $ 0\leq |S_i|\leq t_i $. Also, $ \mathbb{F}_q^*\setminus \{-1,1\}\neq \emptyset $ since $ q>3 $. Choose $ \alpha_i'=(\alpha_{1i},\alpha_{2i},\dots,\alpha_{ni}) \in \mathbb{F}_q^n $ such that $ \alpha_{ji} \in  \mathbb{F}_q^*\setminus \{-1,1\} $ if $ j\in R_i $ and $ \alpha_{ji}=1 $ if $ j\in \{1,2,\dots , k_i\}\setminus R_i.$ Then by  \cite[Theorem 11]{cafinite}, $ \mathcal{C}_i^{\alpha_i'} $ is a Euclidean LCD code over $ \mathbb{F}_q .$  Take $ \alpha=\sum_{m=1}^4 \gamma_m \alpha_m' \in \mathcal{R}^n $, where $ \alpha_m'=\alpha_i' $ for $ m=i $ and $ \alpha_m'=(1,1,\dots,1) $ for $ m\neq i .$ By Theorem \ref{thm 4.2}$,  \mathcal{C}^{\alpha}=\bigoplus_{m=1}^4 \gamma_m\mathcal{C}_m^{\alpha_m'} $ is an  $ [n,k,d ] $ Euclidean  LCD code over $ \mathcal{R} .$ 
	\end{proof}
 In upcoming theorem, we  construct  an $ l $-Galois LCD code from a given linear code over a finite field $ \mathbb{F}_q .$  Then similarly, we provide the construction over  $ \mathcal{R}.$ 
	\begin{theorem}\label{kgaloisfield}
		Let  $\mathbb{F}_q \ (q=p^e)$ be a finite field. For $0<l<e$ and $p^{e-l}+1\mid p^e-1,  \beta =\frac{p^e-1}{p^{e-l}+1}$ (say). Let  $G=[I_k:M]$ be a generator matrix for a linear code $\mathcal{C}$ over $ \mathbb{F}_q $ with parameters $[n,k,d]$ and denote the matrix $G{\left( F^{e-l}(G)\right) }^T$ by $ P $. Let $ 0\leq t\leq k-1$ be an integer  for which any  $I\subseteq\{1,2,...,k\}$ with $0\leq | I|\leq t$, $det(P_I)=0$  and assume there exist $J\subseteq \{1,2,...,k\} $ with cardinality $t+1$ such that $ det(P_J)\neq 0 $. Suppose $ a \in \mathbb{F}_q^n$ such that  $ a_j\in \mathbb{F}_q \setminus (\mathbb{F}_q^*)^{\beta}$ for $ j\in J $ and   $ a_j\in(\mathbb{F}_q^*)^{\beta}$ for $ j\in \{1,2,\dots,n\}\setminus J $. Then $ \mathcal{C}^{a} $ is an $ l $-Galois LCD code over $ \mathbb{F}_q .$
	\end{theorem}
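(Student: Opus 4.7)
The plan is to reduce the $l$-Galois LCD property of $\mathcal{C}^a$ to the nonsingularity of a single $k\times k$ matrix and then apply Lemma \ref{ldet}. By \cite[Theorem 2.4]{galoislcd}, $\mathcal{C}^a$ is $l$-Galois LCD if and only if $Q := G^a \, F^{e-l}(G^a)^T$ is invertible, so the task is to show $\det(Q) \neq 0$ under the stated hypothesis on $a$.

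A direct entrywise calculation gives $Q = G \cdot D \cdot F^{e-l}(G)^T$, where $D = \mathrm{diag}_n[a_1^{p^{e-l}+1}, \ldots, a_n^{p^{e-l}+1}]$; the point is that the column-scaling on $G^a$ combines with the Frobenius-and-column-scaling on $F^{e-l}(G^a)^T$ into a single diagonal with entries $a_s^{p^{e-l}+1}$. Writing $D = I_n + \mathrm{diag}_n[c]$ with $c_s := a_s^{p^{e-l}+1} - 1$, I obtain
\[
 Q \;=\; P \;+\; G \cdot \mathrm{diag}_n[c] \cdot F^{e-l}(G)^T.
\]
Now the systematic form $G = [I_k:M]$ (together with $F^{e-l}(I_k)=I_k$) lets me expand the perturbation: the first $k$ columns contribute $\mathrm{diag}_k[c_1,\ldots,c_k]$, while the last $n-k$ columns contribute $M\cdot\mathrm{diag}_{n-k}[c_{k+1},\ldots,c_n]\cdot F^{e-l}(M)^T$. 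Since $J\subseteq\{1,\ldots,k\}$, the hypothesis forces $c_s = 0$ for all $s>k$, so the second piece vanishes and $Q = P + \mathrm{diag}_k[c']$ with $c'=(c_1,\ldots,c_k)\in\mathbb{F}_q^k$.

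The remaining step is to identify $\mathrm{supp}(c')$, which rests on a group-theoretic fact: under the divisibility assumption $p^{e-l}+1 \mid p^e-1$, the set $(\mathbb{F}_q^*)^\beta$ coincides with $\{x\in\mathbb{F}_q^* : x^{p^{e-l}+1}=1\}$, because both are subgroups of the cyclic group $\mathbb{F}_q^*$ of the common order $p^{e-l}+1$, and such subgroups are unique. The hypothesis on $a$ then reads: $c_s = 0$ for $s\in\{1,\ldots,k\}\setminus J$ and $c_s \neq 0$ for $s\in J$, so $\mathrm{supp}(c') = J$ with $|J|=t+1$. Since $\det(P_I)=0$ for all $I$ with $|I|\le t$, Lemma \ref{ldet} gives
\[
 \det(Q) \;=\; \Big(\prod_{j\in J} c_j\Big)\det(P_J) \;\neq\; 0,
\]
so $\mathcal{C}^a$ is $l$-Galois LCD. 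The main obstacle I anticipate is the algebraic bookkeeping needed to collapse the two scalings and the Frobenius into the single exponent $p^{e-l}+1$, and then to notice that this exponent is exactly what characterises $(\mathbb{F}_q^*)^\beta$ as the set of solutions to $x^{p^{e-l}+1}=1$; once these identifications are in place, Lemma \ref{ldet} does the rest.
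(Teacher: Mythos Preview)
Your argument is correct and follows essentially the same route as the paper: both reduce to showing $G^{a}(F^{e-l}(G^{a}))^T = P + \mathrm{diag}_k[b]$ with $b_j = a_j^{p^{e-l}+1}-1$ having support exactly $J$, and then invoke Lemma~\ref{ldet}. Your write-up simply makes explicit the diagonal-matrix factorization and the identification $(\mathbb{F}_q^*)^{\beta} = \{x : x^{p^{e-l}+1}=1\}$ that the paper leaves implicit; note only that your claim ``$c_s=0$ for $s>k$'' already uses this group-theoretic fact, so it would be cleaner to state it before that step.
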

	\begin{proof}
		A generator matrix $ G^a $ for $ \mathcal{C}^a $ is obtained by multiplying $ j $-th column of matrix $ G $ by $ a_j $ for $ 1\leq j\leq n .$ Then $ G^{a}(F^{e-l}(G^{a}))^T=G{\left( F^{e-l}(G)\right) }^T+ \text{diag}_k[b] $, where $ b=(a_1^{p^{e-l}+1}-1,a_2^{p^{e-l}+1}-1,\dots , a_k^{p^{e-l}+1}-1) .$ Note that support of $ b $ is the set $ J $. By Lemma \ref{ldet}  $ det(G^{a}(F^{e-l}(G^{a}))^T)=det\left( P+ diag_k[b]\right) =\left(\prod_{j\in J}b_j \right) det(P_J)\neq 0  .$ Hence $ \mathcal{C}^{a} $ is an $ l $-Galois LCD code over $ \mathbb{F}_q .$
	\end{proof}	
	\begin{theorem}\label{thm 4.4}
		All notations are as above.  For $0<l<e$ and $p^{e-l}+1\mid p^e-1,\ \  \beta =\frac{p^e-1}{p^{e-l}+1}$ (say). Let $ \mathcal{C}=\bigoplus_{i=1}^{4}\gamma_i\mathcal{C}_i$ be an $ [n,k,d] $ linear code over $ \mathcal{R} ,$ where $ \mathcal{C}_i$'s are component codes over $ \mathbb{F}_q $ with generator matrices $ G_i$'s. Let $ P_i=G_i{\left(F^{e-l}( G_i)\right) }^T $ and $ 0\leq t_i\leq k_i-1 $ be an integer such that $ det((P_i)_{S_i})=0 $ for any  $S_i\subseteq \{1,2,\dots, k_i\} $ with $ 0\leq |S_i|\leq t_i $ and assume there exist $ R_i\subseteq\{1,2,\dots k_i\} $ with cardinality $ t_i+1 $ such that $ det((P_i)_{R_i}) \neq 0.$ Let $\alpha\in \mathcal{R}^n$   such that $ \alpha_{ji}\in \mathbb{F}_q\setminus(\mathbb{F}_q^*)^{\beta}$ for all $ j\in R_i $ and $ \alpha_{ji}\in (\mathbb{F}_q^*)^{\beta} $ for all $ j\in\{1,2,\dots n\}\setminus R_i,$ for $ i=1,2,3,4.$ Then $ \mathcal{C}^{\alpha} $ is an $l$-Galois LCD code over $ \mathcal{R} .$
	\end{theorem}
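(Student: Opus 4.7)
The plan is to mirror the strategy of Theorem \ref{thm 4.2}, replacing the Euclidean input (from \cite[Theorem 11]{cafinite}) with our newly established Theorem \ref{kgaloisfield} for the $l$-Galois setting, and then invoking Theorem \ref{bthm}\ref{b2} to pass from the component-wise conclusion back to $\mathcal{R}$.

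First, I would decompose $\mathcal{C}^{\alpha}$ into its component codes over $\mathbb{F}_q$. Writing $\alpha_j = \sum_{i=1}^4 \gamma_i \alpha_{ji}$ and $c_j = \sum_{i=1}^4 \gamma_i c_{ji}$ with $c_{ji}\in \mathcal{C}_i$, the orthogonality relations $\gamma_i \gamma_m = \delta_{im}\gamma_i$ give $\alpha_j c_j = \sum_{i=1}^4 \gamma_i \alpha_{ji} c_{ji}$, so exactly as in the proof of Theorem \ref{thm 4.2} one obtains
\[
\mathcal{C}^{\alpha} = \bigoplus_{i=1}^4 \gamma_i \mathcal{C}_i^{\alpha_i'},
\]
where $\mathcal{C}_i^{\alpha_i'}$ is the linear code over $\mathbb{F}_q$ with generator matrix $G_i^{\alpha_i'}$ obtained by scaling the $j$-th column of $G_i$ by $\alpha_{ji}$.

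Next, I would verify for each fixed $i\in\{1,2,3,4\}$ that the hypotheses of Theorem \ref{kgaloisfield} are met for the triple $(\mathcal{C}_i,\alpha_i',P_i)$. The matrix $P_i = G_i\bigl(F^{e-l}(G_i)\bigr)^T$ is exactly the matrix appearing in Theorem \ref{kgaloisfield}; the assumption that $\det((P_i)_{S_i})=0$ for $|S_i|\leq t_i$ while $\det((P_i)_{R_i})\neq 0$ matches the determinant hypothesis; and the assumption $\alpha_{ji}\in \mathbb{F}_q\setminus (\mathbb{F}_q^*)^{\beta}$ for $j\in R_i$ together with $\alpha_{ji}\in (\mathbb{F}_q^*)^{\beta}$ for $j\notin R_i$ is precisely the membership condition required there (note that the latter forces $\alpha_{ji}\neq 0$, so $G_i^{\alpha_i'}$ is indeed a valid generator matrix). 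Hence Theorem \ref{kgaloisfield} applies and yields that $\mathcal{C}_i^{\alpha_i'}$ is an $l$-Galois LCD code over $\mathbb{F}_q$ for each $i$.

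Finally, since every component code $\mathcal{C}_i^{\alpha_i'}$ of $\mathcal{C}^{\alpha}$ is $l$-Galois LCD over $\mathbb{F}_q$, Theorem \ref{bthm}\ref{b2} immediately yields that $\mathcal{C}^{\alpha}$ is an $l$-Galois LCD code over $\mathcal{R}$, completing the argument. There is no genuine obstacle here beyond the bookkeeping of the CRT decomposition; the real content has already been isolated in Theorem \ref{kgaloisfield} (whose proof used Lemma \ref{ldet} to expand $\det(P + \mathrm{diag}_k[b])$ with $b_j = a_j^{p^{e-l}+1}-1$), and the step from $\mathbb{F}_q$-components back to $\mathcal{R}$ is automatic from Theorem \ref{bthm}.
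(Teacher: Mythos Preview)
Your proposal is correct and follows essentially the same approach as the paper: decompose $\mathcal{C}^{\alpha}=\bigoplus_{i=1}^4\gamma_i\mathcal{C}_i^{\alpha_i'}$ (exactly as in the proof of Theorem~\ref{thm 4.2}), apply Theorem~\ref{kgaloisfield} to each component $\mathcal{C}_i^{\alpha_i'}$, and conclude via Theorem~\ref{bthm}\ref{b2}. The paper's proof is in fact terser than yours, merely citing the decomposition and the two theorems without rechecking the hypotheses, so your write-up is if anything more explicit.
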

	\begin{proof}
		Since $\mathcal{C}^{\alpha}=\bigoplus_{i=1}^{4}\gamma_i \mathcal{C}_i^{\alpha_i'}, \ \text{where} \  \alpha_i'=(\alpha_{1i},\alpha_{2i},\dots,\alpha_{ni})\in \mathbb{F}_q^n $ and $\mathcal{C}_i^{\alpha_i'}$'s are linear codes over $\mathbb{F}_q$ with generator matrices $ G_i^{\alpha_i'}.$  By the above Theorem \ref{kgaloisfield}, $\mathcal{C}_i^{\alpha_i'}$ are $ l $-Galois LCD codes over $\mathbb{F}_q$ for $i=1,2,3,4$. Therefore, by Theorem \ref{bthm}, $\mathcal{C}^{\alpha}$ is an $ l $-Galois LCD code over $\mathcal{R}$.
	\end{proof}
	The following corollary shows the existence of $\alpha$ for which $\mathcal{C}^{\alpha}$ is an $l$-Galois LCD code, equivalent to linear code $\mathcal{C}$ over $\mathcal{R}$.
	\begin{corollary}\label{kequiv}
		Let  $ \mathbb{F}_q \ (q=p^e)$  be a finite field. For $0<l<e$ and $p^{e-l}+1\mid p^e-1,\ \  \beta =\frac{p^e-1}{p^{e-l}+1}$ ($ \beta >1 $). Let $\mathcal{C}$ be an $ [n,k,d] $ linear code over the ring $ \mathcal{R} .$ Then $ \mathcal{C}^{\alpha} $ is an $ [n,k,d] $ $ l $-Galois LCD code over the ring $ \mathcal{R} $ for some  $\alpha=(\alpha_1,\alpha_2,\dots , \alpha_n)\in \mathcal{R}^n$ with $ \alpha_j\neq 0 $ for  $ 1\leq j\leq n .$
	\end{corollary}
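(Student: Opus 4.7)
The plan is to imitate the proof of Corollary \ref{Eequiv} but replace the Euclidean ingredient (Theorem \ref{thm 4.2} / \cite[Theorem 11]{cafinite}) with its $l$-Galois analog (Theorem \ref{thm 4.4} / Theorem \ref{kgaloisfield}). Concretely, I would first decompose $\mathcal{C}=\bigoplus_{i=1}^{4}\gamma_i\mathcal{C}_i$ via its component codes over $\mathbb{F}_q$. If each $\mathcal{C}_i$ is already $l$-Galois LCD, then by Theorem \ref{bthm}\ref{b2} the code $\mathcal{C}$ itself is $l$-Galois LCD, and I simply choose $\alpha_j=\gamma_1+\gamma_2+\gamma_3+\gamma_4=1$ for all $j$, so that $\mathcal{C}^{\alpha}=\mathcal{C}$ and $\alpha_j\neq 0$.

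Otherwise, at least one component $\mathcal{C}_i$ fails to be $l$-Galois LCD, i.e.\ $P_i:=G_i(F^{e-l}(G_i))^T$ is singular. For each such $i$ I would take the least integer $t_i$ with $0\le t_i\le k_i-1$ for which there exists $R_i\subseteq\{1,\dots,k_i\}$ with $|R_i|=t_i+1$ and $\det((P_i)_{R_i})\neq 0$, while $\det((P_i)_{S_i})=0$ for every $S_i$ with $|S_i|\le t_i$. Such a $t_i$ exists because $(P_i)_{R_i}=I$ when $R_i=\{1,\dots,k_i\}$. (For those $i$ where $\mathcal{C}_i$ is already LCD, I just take $\alpha_i'=(1,1,\dots,1)$.)

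The key hypothesis $\beta=\frac{p^e-1}{p^{e-l}+1}>1$ guarantees that $(\mathbb{F}_q^*)^\beta$ is a proper subgroup of $\mathbb{F}_q^*$, so $\mathbb{F}_q^*\setminus(\mathbb{F}_q^*)^\beta\neq\emptyset$. I would then pick $\alpha_i'=(\alpha_{1i},\dots,\alpha_{ni})\in\mathbb{F}_q^n$ with $\alpha_{ji}\in\mathbb{F}_q^*\setminus(\mathbb{F}_q^*)^\beta$ for $j\in R_i$ and $\alpha_{ji}=1\in(\mathbb{F}_q^*)^\beta$ for $j\in\{1,\dots,n\}\setminus R_i$; every coordinate is nonzero. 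Theorem \ref{kgaloisfield} then shows $\mathcal{C}_i^{\alpha_i'}$ is $l$-Galois LCD over $\mathbb{F}_q$ for each $i$. Setting $\alpha=\sum_{i=1}^{4}\gamma_i\alpha_i'\in\mathcal{R}^n$, each $\alpha_j=\sum_i\gamma_i\alpha_{ji}$ is nonzero since every $\alpha_{ji}\neq 0$, and Theorem \ref{thm 4.4} (equivalently Theorem \ref{bthm}\ref{b2} applied to the LCD components) yields that $\mathcal{C}^{\alpha}=\bigoplus_{i=1}^{4}\gamma_i\mathcal{C}_i^{\alpha_i'}$ is $l$-Galois LCD over $\mathcal{R}$. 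Finally, $\mathcal{C}^{\alpha}$ is obtained from $\mathcal{C}$ by multiplying each coordinate by a unit of $\mathcal{R}$, so it is a monomially equivalent $[n,k,d]$ code.

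The only real obstacle is verifying that the coordinates of $\alpha$ can be chosen nonzero while still landing outside $(\mathbb{F}_q^*)^\beta$ on the indices $R_i$; this is precisely where the assumption $\beta>1$ (i.e.\ $(\mathbb{F}_q^*)^\beta\subsetneq\mathbb{F}_q^*$) is used. Everything else is bookkeeping: combining the four component constructions through the $\gamma_i$-decomposition and invoking the already-proved results.
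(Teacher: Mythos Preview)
Your proposal is correct and follows essentially the same route as the paper: decompose $\mathcal{C}$ into its component codes, for each non-LCD component $\mathcal{C}_i$ locate a minimal $t_i$ and a subset $R_i$ with $\det((P_i)_{R_i})\neq 0$, use $\beta>1$ to pick $\alpha_{ji}\in\mathbb{F}_q^{*}\setminus(\mathbb{F}_q^{*})^{\beta}$ on $R_i$ and $1$ elsewhere, invoke Theorem~\ref{kgaloisfield} for each component, and lift through the $\gamma_i$-decomposition via Theorem~\ref{bthm}\ref{b2}. If anything, your write-up is slightly more careful than the paper's in that you explicitly treat \emph{every} non-LCD component simultaneously (the paper's proof, as written, singles out one bad index $i$ and sets $\alpha_m'=(1,\dots,1)$ for $m\neq i$, which tacitly assumes the remaining components are already LCD), and you spell out why each $\alpha_j$ is a unit of $\mathcal{R}$ so that $\mathcal{C}^{\alpha}$ retains the parameters $[n,k,d]$.
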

	\begin{proof}
		Let $ \mathcal{C}=\bigoplus_{i=1}^4\gamma_i\mathcal{C}_i $ be a linear code over $ \mathcal{R} $. Take $ \alpha=(\alpha_1,\alpha_2,\dots , \alpha_n) \in \mathcal{R}^n$, where $ \alpha_j= \gamma_1+\gamma_2 +\gamma_3+\gamma_4$ for $ 1\leq j\leq n $,  if $\mathcal{C}$ is $ l $-Galois LCD code over $ \mathcal{R} .$ Then $ \mathcal{C}^{\alpha}=\mathcal{C} $, which is  $ l $-Galois LCD code over $ \mathcal{R} .$
					
		If $\mathcal{C}$ is not $ l $-Galois LCD code then by Theorem \ref{bthm}, for some $ 1\leq i\leq 4,\   \mathcal{C}_i $ is not $ l $-Galois LCD code. Let $ G_i $ be generator matrix for $ \mathcal{C}_i $ then  $ det(G_i(F^{e-l}(G_i))^T)=0 .$ Let $ P_i=G_i(F^{e-l}(G_i))^T $, then there exists an integer $ t_i\geq0 $ and  $ R_i \subseteq \{1,2,\dots k_i\} $ with cardinality $ | R_i| =t_i+1$  such that  $ det((P_i)_{R_i})\neq 0 $ and  $ det((P_i)_{S_i})=0 $ for any  $ S_i \subseteq \{1,2,\dots k_i\} $ with cardinality $ 0\leq  |S_i| \leq t_i .$  Also, $  \mathbb{F}_q^*\setminus (\mathbb{F}_q^*)^\beta \neq \emptyset $ since $ \beta>1.$   Choose $ \alpha_i'=(\alpha_{1i},\alpha_{2i},\dots,\alpha_{ni}) \in \mathbb{F}_q^n $ such that $ \alpha_{ji} \in  \mathbb{F}_q^*\setminus (\mathbb{F}_q^*)^\beta $ for $ j\in R_i $ and $ \alpha_{ji}=1 $ for $ j\in \{1,2,\dots , k_i\}\setminus R_i.$ Then by Theorem \ref{kgaloisfield},  $ \mathcal{C}_i^{\alpha_i'} $ is $ l $-Galois LCD code over $\mathbb{F}_q$. Take $ \alpha=\sum_{m=1}^4 \gamma_m \alpha_m' \in \mathcal{R}^n $ where $ \alpha_m'=\alpha_i' $ for $ m=i $ and $ \alpha_m'=(1,1,\dots,1) $ for $ m\neq i .$ By Theorem \ref{thm 4.4}, $ \mathcal{C}^{\alpha} $ is an $ [n,k,d] $ $ l $-Galois LCD code over the ring $ \mathcal{R} .$ 
	\end{proof}
	\section{MDS Code over $ \mathbf{\mathcal{R}} $}\label{mdscode}
	For a linear code $\mathcal{C}$  over the ring $\mathcal{R}$ with parameters $[n,k,d]$, we have  $|\mathcal{C}|\leq|\mathcal{R}|^{n-d+1}$  implies $d\leq n-\log_{|\mathcal{R}|}|\mathcal{C}|+1$, which is known as Singleton bound on ring $\mathcal{R}$.  Since $|\mathcal{R}|=q^4$ and $|\mathcal{C}|=q^k$, where $k=\sum_{i=1}^{4}k_i$, thus the Singleton bound is $d\leq n-\frac{1}{4}\sum_{i=1}^{4}k_i+1.$
	The code which attains the Singleton bound is called the MDS code. We have the following result on MDS code for finite field $\mathbb{F}_q$, see \cite{galoislcd}
	\begin{lemma}
		If $C$ is  a linear code over $\mathbb{F}_q$, then the following are equivalent
		\begin{enumerate}
			\item $C$ is MDS code over $\mathbb{F}_q.$
			\item $C^{\perp}$ is MDS code over $\mathbb{F}_q.$
			\item $C^{\perp_l}$ is MDS code  over $\mathbb{F}_q.$
		\end{enumerate}
	\end{lemma}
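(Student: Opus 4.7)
The plan is to prove the three-way equivalence by establishing (1) $\Leftrightarrow$ (2) via the classical Euclidean-dual argument, and then reducing (3) to (2) by exhibiting $\mathcal{C}^{\perp_l}$ as the Euclidean dual of a code with the same parameters as $\mathcal{C}$.

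First I would recall the standard characterization that a linear $[n,k,d]$ code over $\mathbb{F}_q$ is MDS precisely when every $k$ columns of any generator matrix are linearly independent, equivalently every $n-k$ columns of any parity-check matrix are linearly independent. This symmetric formulation immediately yields (1) $\Leftrightarrow$ (2): the parity-check matrix of $C$ is a generator matrix of $C^{\perp}$, so the "any $k$ columns independent" condition for $C$ matches the "any $(n-k)$ columns independent" condition for $C^{\perp}$, meaning $C$ meets the Singleton bound iff $C^{\perp}$ does. (One could alternatively argue via the Singleton-type identity $d(C) + d(C^{\perp}) = n + 2$ for MDS codes, but the generator/parity-check column argument is cleanest.)

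For (2) $\Leftrightarrow$ (3), the key step is the field-analogue of Lemma \ref{lemma3.1}: for any linear code $C$ over $\mathbb{F}_q$, one has $C^{\perp_l}=(C^{p^{e-l}})^{\perp}$, where $C^{p^{e-l}}=\{(F^{e-l}(c_1),\dots,F^{e-l}(c_n)) : (c_1,\dots,c_n)\in C\}$. Now $F^{e-l}$ is a ring automorphism of $\mathbb{F}_q$, hence applying it coordinate-wise is an $\mathbb{F}_q$-semilinear bijection of $\mathbb{F}_q^n$ that preserves the zero set of each coordinate, and therefore preserves Hamming weight. Consequently $C^{p^{e-l}}$ is a linear code over $\mathbb{F}_q$ with exactly the same parameters $[n,k,d]$ as $C$, so $C$ is MDS if and only if $C^{p^{e-l}}$ is MDS.

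Combining these observations closes the loop: $C^{\perp_l}$ is MDS $\Leftrightarrow$ $(C^{p^{e-l}})^{\perp}$ is MDS $\Leftrightarrow$ $C^{p^{e-l}}$ is MDS (by the already-proved (1) $\Leftrightarrow$ (2) applied to $C^{p^{e-l}}$) $\Leftrightarrow$ $C$ is MDS, which together with (1) $\Leftrightarrow$ (2) gives all three equivalences. I do not anticipate a serious obstacle; the only point to handle carefully is verifying that the field-version of Lemma \ref{lemma3.1} goes through (a short direct check using $\langle s, t\rangle_l = \langle s, F^{l}(t)\rangle$ and the fact that $F^{e}$ is the identity on $\mathbb{F}_q$) and that the coordinate-wise Frobenius really is weight-preserving, which follows because $F^{e-l}(a)=0$ iff $a=0$.
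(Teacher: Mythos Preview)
Your argument is correct, but note that the paper does not actually supply a proof of this lemma: it is stated with a citation to \cite{galoislcd} and used as a black box. So there is no ``paper's own proof'' to compare against here.

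That said, your strategy is exactly the one the paper itself deploys one step later, in the proof of the theorem over $\mathcal{R}$ that immediately follows: for part (3) of that theorem the authors invoke Lemma~\ref{lemma3.1} to write $\mathcal{C}^{\perp_l}=(\mathcal{C}^{p^{e-l}})^{\perp}$, observe that $\mathcal{C}^{p^{e-l}}$ has the same parameters as $\mathcal{C}$, and then reduce to the Euclidean-dual case. Your proof is simply the field version of that same reduction, together with the classical (1)$\Leftrightarrow$(2) for Euclidean duals. Nothing is missing; the verification that coordinate-wise Frobenius is weight-preserving and that the field analogue of Lemma~\ref{lemma3.1} holds are both routine, as you indicate.
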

	
	The following theorem shows that a linear code is an MDS code if and only if its Euclidean dual ($ l $-Galois) is an MDS code over the ring $ \mathcal{R}$.	
	\begin{theorem}
		If $ \mathcal{C}= \bigoplus_{i=1}^4\gamma_i \mathcal{C}_i $ is a linear code over the ring $\mathcal{R}$, where  $ \mathcal{C}_i $'s are component codes over the finite field $ \mathbb{F}_q $ then
		\begin{enumerate}
			\item \label{mds1} $\mathcal{C}$ is an MDS code over the ring $\mathcal{R}$ if and only if $\mathcal{C}_i$'s are MDS codes over $\mathbb{F}_q$ with same parameters for each $i=1,2,3,4.$
			\item $\mathcal{C}$ is an MDS code over the ring $\mathcal{R}$ if and only if $\mathcal{C}^\perp$ is an MDS code over $\mathcal{R}$.
			\item  $\mathcal{C}$ is an MDS code over the ring $\mathcal{R}$ if and only if $\mathcal{C}^{\perp_l}$ is an MDS code over $\mathcal{R}$.
		\end{enumerate}
	\end{theorem}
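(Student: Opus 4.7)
The plan is to establish part~\ref{mds1} directly from the Singleton bound, and then deduce parts~2 and~3 by combining part~\ref{mds1} with Theorem~\ref{bthm}(\ref{b1}) and the lemma on MDS codes over $\mathbb{F}_q$ stated immediately before.

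For part~\ref{mds1}, I would work with the three facts already in hand: $|\mathcal{C}|=\prod_{i=1}^{4}|\mathcal{C}_i|=q^{\sum k_i}$, $d_L(\mathcal{C})=\min_{1\le i\le 4} d_H(\mathcal{C}_i)$, and the Singleton bound over $\mathcal{R}$, which reads $d\le n-\tfrac{1}{4}\sum k_i+1$. Each component also satisfies $d_i\le n-k_i+1$. The easy direction: if all $\mathcal{C}_i$ are MDS with a common set of parameters $[n,k,n-k+1]$, then $d_L(\mathcal{C})=n-k+1=n-\tfrac{1}{4}\sum k_j+1$, so $\mathcal{C}$ is MDS. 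For the converse, assume $\mathcal{C}$ is MDS. Then for every $i$,
\[
n-\tfrac{1}{4}\textstyle\sum_{j} k_j+1 \;=\; \min_{j} d_j \;\le\; d_i \;\le\; n-k_i+1,
\]
so $k_i\le\tfrac{1}{4}\sum_{j} k_j$ for each $i$. Summing over $i$ forces equality throughout, hence $k_1=k_2=k_3=k_4=:k$, and the same chain of inequalities pins down $d_i=n-k+1$ for every $i$, so each $\mathcal{C}_i$ is MDS with the same parameters.

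For part~2, Theorem~\ref{bthm}(\ref{b1}) specialised to $l=0$ gives $\mathcal{C}^{\perp}=\bigoplus_{i=1}^{4}\gamma_i\mathcal{C}_i^{\perp}$. By part~\ref{mds1}, $\mathcal{C}$ is MDS iff all $\mathcal{C}_i$ are MDS with the same parameters $[n,k,n-k+1]$. By the preceding lemma (equivalence of $\mathcal{C}_i$ MDS and $\mathcal{C}_i^{\perp}$ MDS), this is equivalent to each $\mathcal{C}_i^{\perp}$ being MDS with the (common) parameters $[n,n-k,k+1]$. A second application of part~\ref{mds1}, this time to $\mathcal{C}^{\perp}$, finishes the proof. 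Part~3 is obtained by the same two-step template: use Theorem~\ref{bthm}(\ref{b1}) to decompose $\mathcal{C}^{\perp_l}=\bigoplus_i\gamma_i\mathcal{C}_i^{\perp_l}$, invoke the third equivalence of the preceding lemma to transfer the MDS property between $\mathcal{C}_i$ and $\mathcal{C}_i^{\perp_l}$ (noting the parameters $[n,n-k,k+1]$ remain uniform across $i$), and reapply part~\ref{mds1}.

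The only nontrivial step is the converse direction of part~\ref{mds1}: the hypothesis that $\mathcal{C}$ saturates the Singleton bound over $\mathcal{R}$ does not, a priori, force either uniform dimension across the four components or simultaneous saturation in each component over $\mathbb{F}_q$. The averaging inequality $k_i\le\tfrac14\sum_j k_j$ combined with the sum constraint is exactly what collapses both conditions at once, so that step carries the technical weight of the argument; the rest is bookkeeping through Theorem~\ref{bthm}(\ref{b1}) and the cited lemma.
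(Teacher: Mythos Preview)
Your proof is correct and, for parts~\ref{mds1} and~2, essentially matches the paper's argument; the only cosmetic difference in part~\ref{mds1} is that you deduce $k_1=\cdots=k_4$ first (via $k_i\le\tfrac14\sum_j k_j$ and summing), whereas the paper deduces $d_1=\cdots=d_4$ first (via $\sum_i d_i\le 4d_j\le\sum_i d_i$), but these are dual versions of the same averaging trick.

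For part~3 there is a genuine, if minor, difference in route. You decompose $\mathcal{C}^{\perp_l}=\bigoplus_i\gamma_i\mathcal{C}_i^{\perp_l}$ via Theorem~\ref{bthm}(\ref{b1}) and then apply the third equivalence of the preceding lemma componentwise, exactly in parallel with part~2. The paper instead invokes Lemma~\ref{lemma3.1}, writing $\mathcal{C}^{\perp_l}=(\mathcal{C}^{p^{e-l}})^{\perp}$, observes that $\mathcal{C}^{p^{e-l}}$ has the same parameters as $\mathcal{C}$ (hence is MDS iff $\mathcal{C}$ is), and then quotes part~2 directly. Your approach is arguably cleaner since it reuses the same template as part~2 and avoids the detour through $\mathcal{C}^{p^{e-l}}$; the paper's approach has the small advantage of not needing the $l$-Galois case of the field-level lemma. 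Both are valid and of comparable length.
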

	\begin{proof}
		(1)	Suppose $\mathcal{C}$ is MDS code over the  ring $\mathcal{R}$ with parameter $[n,k,d]$, where $4d=4n-\sum_{i=1}^{4}k_i+4$. Since $d=\min\limits_{1\leq i\leq 4}\{d_i\}$, where $ d_i=d_H(\mathcal{C}_i) $, it follows that $d=d_j$ for some $j=1,2,3,4$. This implies $4d_j=4n-\sum_{i=1}^{4}k_i+4$. Now we know that $d_i\leq n-k_i+1$ for $i=1,2,3,4$ and so $\sum_{i=1}^{4}d_i\leq 4n-\sum_{i=1}^{4}k_i+4=4d_j$. Since $d_j$ is minimum of $d_i$'s for $i=1,2,3,4$ therefore, $4d_j\leq\sum_{i=1}^{4}d_i$. Which conclude that $4d_j=\sum_{i=1}^{4}d_i$. It is only possible when $d_1=d_2=d_3=d_4.$ Hence $\mathcal{C}_i$'s are MDS codes over $\mathbb{F}_q$ with  the same parameters. \\ 
		Conversely, if $\mathcal{C}_i$'s are MDS codes with  the same parameters that is $d_1=d_2=d_3=d_4$ and $d_i=n-k_i+1\implies4d_i=4n-\sum_{i=1}^{4}k_i+4$ for $i=1,2,3,4$. Since $d=\min\limits_{1\leq i\leq 4}\{d_i\}$ this implies $4d=4n-\sum_{i=1}^{4}k_i+4$. Hence $\mathcal{C}$ is an MDS code. \\
		(2) Let $\mathcal{C}$ be an MDS code over the ring $\mathcal{R}$ then by  \ref{mds1}, $\mathcal{C}_i$'s are MDS codes over $\mathbb{F}_q$  having same parameters  for each $i=1,2,3,4$. Hence $\mathcal{C}_i^{\perp}$'s are  MDS codes over  $\mathbb{F}_q$ with same parameters for each $i=1,2,3,4$ this implies $\mathcal{C}^{\perp}$ is an MDS code over the ring $\mathcal{R}$. Similar argument can be made for converse.\\
		(3) Let $\mathcal{C}$ be an  MDS code over the ring $\mathcal{R}$, then $\mathcal{C}^{p^{(e-l)}}$ is also an MDS code over the ring $\mathcal{R}$, using  Lemma \ref{lemma3.1}, we get $\mathcal{C}^{\perp_l}=\left(\mathcal{C}^{p^{(e-l)}}\right) ^\perp$, hence  $\mathcal{C}^{\perp_l}$ is the  MDS code over $\mathcal{R}$. Conversely, assume $\mathcal{C}^{\perp_l}$ is MDS code over $\mathcal{R}$ it follows that $ \mathcal{C}^{p^{(e-l)}} $ MDS. Hence $\mathcal{\mathcal{C}}$ is MDS code over $\mathcal{R}$.
	\end{proof}
	\begin{remark}
		Result \ref{mds1} in the above theorem has been done over the ring $\mathbb{Z}_4+u\mathbb{Z}_4+v\mathbb{Z}_4+uv\mathbb{Z}_4$ in \cite{someresultsz4}. 	
	\end{remark}	
	\section{Conclusion}\label{con}
	We have studied $l$-Galois LCD codes over $ \mathcal{R}=\mathbb{F}_q+u\mathbb{F}_q+v\mathbb{F}_q+ uv\mathbb{F}_q.$ We have given a  correspondence relation between a linear code  and  its component codes to be  $l$-Galois LCD codes. We have also given  construction of $l$-Galois LCD code   from linear codes over $ \mathcal{R} .$ Consequently, we have shown that  for any linear code over $ \mathcal{R}, $ there exists  equivalent  Euclidean LCD (for $ q>3 $) and $l$-Galois LCD code (for $0<l<e$ and $p^{e-l}+1\mid p^e-1$ ) over $ \mathcal{R} .$ After that, We have given some results on MDS codes over $ \mathcal{R} .$  Further, the study of $l$-Galois LCD MDS code over $ \mathcal{R} $  could be an interesting problem. One can also investigate $l$-Galois LCD codes over different rings.
	\bibliographystyle{plain}
	\bibliography{Galois_LCD_codes_over_Fq+uFq+vFq+uvFq}
   Astha Agrawal \\
   Department of Mathematics \\
   Indian Institute of Technology Delhi,\\
   New Delhi, 110016, India\\
   \textit{Email Address:} \url{asthaagrawaliitd@gmail.com}\\\\
   Gyanendra K. Verma \\
   Department of Mathematics \\
   Indian Institute of Technology Delhi,\\
   New Delhi, 110016, India\\
  \textit{ Email Address:} \url{gkvermaiitdmaths@gmail.com}\\\\
   R. K. Sharma \\
   Department of Mathematics \\
   Indian Institute of Technology Delhi,\\
   New Delhi, 110016, India\\
   \textit{Email Address:} \url{rksharmaiitd@gmail.com}\\\\
   	
\end{document}